\documentclass[sigconf]{acmart}

\copyrightyear{2021}
\acmYear{2021}
\setcopyright{rightsretained}
\acmConference[ISSAC '21]{International Symposium on Symbolic and Algebraic Computation}{July 18--22, 2021}{Saint Petersburg, Russia}
\acmBooktitle{International Symposium on Symbolic and Algebraic Computation (ISSAC '21), July 18--22, 2021, Saint Petersburg, Russia}
\acmPrice{15.00}
\acmDOI{XX.XXX/XXXXXX.XXXXXX}
\acmISBN{XXXXXXXXXXXXXXXXXXX}

\setlength{\paperheight}{11in}
\setlength{\paperwidth}{8.5in}

\usepackage[utf8]{inputenc}

\hyphenation{regarding}

\usepackage{hyperref}

\usepackage{amsmath}
\usepackage{amsthm}
\usepackage{mathrsfs}
\usepackage{stmaryrd}
\usepackage{enumerate}
\usepackage[algoruled,vlined,english,linesnumbered,algonl]{algorithm2e}

\SetCommentSty{mycommfont}
\usepackage{comment}
\usepackage{multirow}
\usepackage{xspace}
\usepackage{verbatim}

\usepackage{mathtools}
\mathtoolsset{showonlyrefs,showmanualtags}

\usepackage[inline]{enumitem}

\usepackage{array}
\newcommand{\PreserveBackslash}[1]{\let\temp=\\#1\let\\=\temp}
\newcolumntype{C}[1]{>{\PreserveBackslash\centering}p{#1}}

\usepackage{tikz}
\usetikzlibrary{calc, matrix, backgrounds,shapes.geometric}

\usepackage{mathtools}
\mathtoolsset{showonlyrefs,showmanualtags}

\usepackage[inline]{enumitem}

\definecolor{input}{HTML}{303060}
\definecolor{output}{HTML}{804000}
\definecolor{string}{HTML}{A02020}
\definecolor{parent}{HTML}{A020A0}
\definecolor{function}{HTML}{205080} 
\definecolor{constructor}{HTML}{205080}
\definecolor{method}{HTML}{205080}
\definecolor{keyword}{HTML}{008000}
\definecolor{error}{HTML}{B01010}
\definecolor{comment}{HTML}{60A060}

\SetArgSty{textup}
\SetKwInOut{Input}{Input}\SetKwInOut{Output}{Output}
\setlength{\algomargin}{2em} 

\renewcommand{\epsilon}{\varepsilon}

\newcommand{\noopsort}[1]{}

\DeclareMathOperator{\val}{val}

\newcommand{\N}{\mathbb N}

\newcommand{\NN}{\mathbb N}
\newcommand{\ZZ}{\mathbb Z}

\newcommand{\Q}{\mathbb Q}
\newcommand{\QQ}{\mathbb Q}
\newcommand{\R}{\mathbb R}
\newcommand{\Qp}{\Q_p}

\newcommand{\FF}{\mathbb{F}}

\newcommand{\Kz}{K^\circ}
\newcommand{\KzX}[1][]{K\{ \X \ifnonempty{#1}{; #1}{} \}^\circ}
\newcommand{\KX}[1][]{K \{ \X \ifnonempty{#1}{; #1}{} \}}

\newcommand{\Lz}{L^\circ}

\renewcommand{\mod}{\;\mathrm{mod}\;}

\newcommand{\X}{\mathbf{X}}
\newcommand{\Y}{\mathbf{Y}}
\renewcommand{\i}{\mathbf{i}}

\renewcommand{\r}{\mathbf{r}}
\renewcommand{\u}{\mathbf{u}}

\newcommand{\smallsp}{\textup{Small}}
\newcommand{\bigsp}{\textup{Big}}
\newcommand{\Sat}{\textup{Sat}}
\renewcommand{\prec}{\textup{prec}}

\newcommand{\softO}{O\tilde{~}}

\newcommand{\floor}[1]{\lfloor #1 \rfloor}

\newcommand{\ifnonempty}[3]{%
  \def\tempa{}%
  \def\tempb{#1}%
  \ifx\tempa\tempb 
  #3 
  \else            
  #2
  \fi}

\makeatletter
\newcommand{\removelatexerror}{\let\@latex@error\@gobble}
\makeatother

\DeclareMathOperator{\LT}{LT}

\DeclareMathOperator{\LM}{LM}
\DeclareMathOperator{\NF}{NF}

\definecolor{purple}{rgb}{0.6,0,0.6}

\definecolor{answer}{rgb}{0,0.5,0.2}

\clubpenalty=10000
\widowpenalty = 10000

\newtheorem{theo}{Theorem}[section]
\newtheorem{lem}[theo]{Lemma}
\newtheorem{prop}[theo]{Proposition}

\theoremstyle{definition}
\newtheorem{rem}[theo]{Remark}

\copyrightyear{2021}
\acmYear{2021}
\setcopyright{rightsretained}
\acmConference[Conference'21]{Conference'21}{July 2021}{Washington, DC, USA}
\acmBooktitle{}
\acmPrice{15.00}
\acmDOI{}
\acmISBN{}

\begin{document}

\title{On FGLM Algorithms with Tate Algebras}

\author{Xavier Caruso}
\affiliation{Université de Bordeaux,
  \institution{CNRS, INRIA}
  \city{Bordeaux}
  \country{France}}
\email{xavier.caruso@normalesup.org}

\author{
  Tristan Vaccon}
\affiliation{Universit\'e de Limoges;
  \institution{CNRS, XLIM UMR 7252}
  \city{Limoges}
  \country{France}  
  \postcode{87060}  
}
\email{tristan.vaccon@unilim.fr}

\author{
  Thibaut Verron}
\affiliation{Johannes Kepler University, 
  \institution{Institute for Algebra}
  \city{Linz}
  \country{Austria}  
}
\email{thibaut.verron@jku.at}

\thanks{This work was supported by the ANR project CLap--CLap
(ANR-18-CE40-0026-01).
T.~Verron was supported by the Austrian FWF grant P31571-N32.}

\begin{abstract}
Tate introduced in~\cite{Tate} the notion of Tate algebras to serve, in
the context of analytic geometry over the $p$-adics, as a counterpart
of polynomial algebras in classical algebraic geometry. In~\cite{CVV,
CVV2} the formalism of Gröbner bases over Tate algebras has been
introduced and advanced signature-based algorithms have been proposed.
In the present article, we extend the FGLM algorithm of
\cite{Faugere:1993} to Tate algebras. Beyond allowing for fast change
of ordering, this strategy has two other important benefits.
First, it provides an efficient algorithm for changing the radii of
convergence which, in particular, makes effective the bridge between
the polynomial setting and the Tate setting and may help in speeding
up the computation of Gröbner basis over Tate algebras.
Second, it gives the foundations for designing a fast algorithm for
interreduction, which could serve as basic primitive in our previous
algorithms and accelerate them significantly.
\end{abstract}

\begin{CCSXML}
  <ccs2012>
  <concept>
  <concept_id>10010147.10010148.10010149.10010150</concept_id>
  <concept_desc>Computing methodologies~Algebraic algorithms</concept_desc>
  <concept_significance>500</concept_significance>
  </concept>
  </ccs2012>
\end{CCSXML}

\ccsdesc[500]{Computing methodologies~Algebraic algorithms}


\vspace{-1.5mm}
\terms{Algorithms, Theory}

\keywords{Algorithms, Gröbner bases, Tate algebra, FGLM 
  algorithm, $p$-adic precision}

\maketitle

\section{Introduction}

Lying at the intersection of geometry and number theory, one finds $p$-adic
geometry. A paramount part of this theory is the study of $p$-adic analytic 
varieties, 
first defined by Tate in~\cite{Tate} (see also \cite{FP04}).
They have played a key role in many developments of number
theory (\emph{e.g.} 
$p$-adic cohomologies~\cite{LS07}, $p$-adic modular forms~\cite{Go88}).
The main algebraic objects upon which Tate's geometry is built are
Tate algebras and their ideals, formed of convergent multivariate power series over 
a complete discrete valuation field~$K$ (\emph{e.g.} $K = \Qp$).

In earlier papers~\cite{CVV, CVV2}, the authors showed that it is possible to define and compute Gröbner bases of Tate ideals with coefficients in $\ZZ_{p}$ or $\QQ_{p}$, and that the definitions are compatible with the usual theory on polynomials over the residue field $\FF_{p}$ or over the coefficient ring.
A major limitation of the algorithms is the increasing cost of reductions as the precision grows.
Our previous paper~\cite{CVV2} addresses the case of expensive reductions to zero, through the use of signature algorithms, but computing the result of non-trivial reductions remains expensive.
Another question left open was whether it is possible to exploit overconvergence properties, namely the knowledge that the series we are working with satisfy a stronger convergence condition.

In the present paper, we adapt the classical FGLM algorithm to the case 
of Tate series, and we show that it gives answers to both 
questions, in the case of zero-dimensional ideals. Precisely, we 
prove the following theorem.

\begin{theo}
\label{theo:intro}
Let $\KX[\r]$\footnote{Here $K$ 
denotes the base field and $\r$ encodes the radii of convergence of 
our series; we refer to \S \ref{subsec:Tate_alg} for the precise 
definitions.} and $\KX[\u]$ be two Tate algebras with
$\KX[\r] \subset \KX[\u]$.

There exists an algorithm that takes as input a reduced Gröbner basis 
$G$ of a $0$-dimensional ideal $I$ of $\KX[\r]$ with respect to a given 
monomial ordering and output a Gröbner basis of the ideal 
$I{\cdot}\KX[\u]$ of $\KX[\u]$ for another given monomial ordering.

Moreover, if $n$ denotes the number of variables, if $\delta$ is the 
dimension of the quotient $\KX[\r]/I$ and if $\prec$ is the precision 
at which the result 
is output, the complexity of this algorithm is:
\begin{itemize}
\item
$\softO(n \delta^3 \prec)$  operations in the base field $K$ 
for a general $K$,
\item $\softO(n 
\delta^3 \prec \cdot \log p)$ bit operations when $K = \Qp$.
\end{itemize}
\end{theo}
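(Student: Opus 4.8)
The plan is to adapt the classical FGLM strategy to the Tate setting, working with the $K^\circ$-module structure of the quotient rather than just the $K$-vector space structure, so that convergence (and overconvergence) is controlled. The input reduced Gröbner basis $G$ of $I \subset \KX[\r]$ lets us compute normal forms with respect to the source ordering, and hence gives us an explicit handle on the quotient $A_\r = \KX[\r]/I$, which is a free $K$-module of rank $\delta$ with a distinguished basis of monomials $\mathcal{E} = \{\varepsilon_1,\dots,\varepsilon_\delta\}$ (the staircase). First I would set up, exactly as in Faugère--Gianni--Lazard--Mora, the "monomial walk": enumerate monomials $m$ in increasing order for the target ordering $\u$, maintaining a list of monomials already known to be standard (outside the target staircase) and a list of leading terms of the output basis; for each new $m$, compute its normal form vector $\NF_G(m) \in K^\delta$ in the source basis $\mathcal{E}$, and test whether it is a $K$-linear combination of the normal forms of the previously retained standard monomials. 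If it is, the corresponding relation yields a new element of the target Gröbner basis; if not, $m$ becomes a new standard monomial. The multiplication-by-$X_i$ matrices on $A_\r$ (read off from $G$) let us update $\NF_G(mX_i)$ from $\NF_G(m)$ by a single matrix-vector product, so no extra reductions against $G$ are needed during the walk.

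The genuinely Tate part of the argument is threefold. First, one must justify that this walk terminates and that the relations produced actually generate $I\cdot\KX[\u]$ as a Tate ideal, not merely the polynomial ideal they generate: since $\KX[\r]\subset\KX[\u]$ and $I$ is zero-dimensional, $I\cdot\KX[\u]$ is again zero-dimensional and $A_\u = \KX[\u]/I\KX[\u] \cong A_\r \otimes_{\KX[\r]} \KX[\u]$, which one checks is still free of rank $\delta$ over $K$ with the same underlying $K$-structure — only the Gröbner staircase changes because the term ordering (and the weights $\u$ versus $\r$) change. Thus the linear algebra happens in a fixed $\delta$-dimensional $K$-space, and the output polynomials, once verified to reduce the $S$-polynomials to zero (Buchberger's criterion, which holds in Tate algebras by the results of \cite{CVV}), form a Gröbner basis for the target ordering. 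Second — and this is the step I expect to be the main obstacle — one must control precision and convergence: entries of $\NF_G(m)$ are elements of $K$ whose valuations can drift as $m$ grows, and in the Tate algebra a genuine series appears precisely because infinitely many monomials contribute with valuations tending to infinity. One has to prove that the linear-dependence test, performed at finite $p$-adic precision $\prec$, is both correct (a spurious dependence cannot be certified) and complete for the purpose of outputting the answer to precision $\prec$; this requires a quantitative statement that once a monomial $m$ has large enough $\u$-valuation relative to $\prec$, its normal form is negligible and need not be examined, so the walk really does stop after finitely many steps. Here the overconvergence bridge is used: passing from $\r$ to $\u$ with $\KX[\r]\subset\KX[\u]$ means the radii shrink, the relevant valuations grow faster, and the truncation bound is effective. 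Third, one must argue that the monomials $m$ to be tested, though a priori infinite in number, are exhausted after $O(\delta)$ "new standard monomial" events interleaved with tests, exactly as in the classical complexity analysis, because each retained standard monomial strictly enlarges the span in $K^\delta$.

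For the complexity bound I would argue as follows. The walk retains exactly $\delta$ standard monomials for the target staircase, and between consecutive retentions it tests $O(1)$ amortized monomials on the frontier, for a total of $O(n\delta)$ frontier monomials processed (each standard monomial has at most $n$ neighbours $mX_i$). For each frontier monomial we do one matrix-vector product with a multiplication matrix ($O(\delta^2)$ base-field operations at precision $\prec$) to get its normal-form vector, and one linear-dependence test against the current echelonized span; maintaining the span incrementally in row-echelon form costs $O(\delta^2)$ operations per test and $O(\delta^3)$ in total over the whole walk, or $O(n\delta^3)$ if one is less careful and re-solves each time — in either case dominated by $O(n\delta^3)$ field operations, each on elements known to precision $\prec$, giving $\softO(n\delta^3\prec)$ operations in $K$, and $\softO(n\delta^3\prec\log p)$ bit operations over $\Qp$ by the standard cost of arithmetic on $p$-adics at precision $\prec$ (the soft-$O$ absorbing the $\log$ factors from fast integer multiplication). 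The only subtlety beyond the classical FGLM count is checking that the precision $\prec$ at which each field operation is carried out does not itself grow during the computation — i.e. that no catastrophic precision loss occurs in the echelonization — which follows because the multiplication matrices have entries in $K^\circ$ (integral) and the pivots encountered are, up to the controlled valuation drift bounded in the previous paragraph, units or of bounded denominator; I would make this precise with a lemma on the valuations of the pivots appearing in the row-echelon form of the normal-form matrix.
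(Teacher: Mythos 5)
There is a genuine gap, and it is at the heart of what makes the Tate case different from classical FGLM. You assert that $A_\u = \KX[\u]/I\KX[\u]$ is ``still free of rank $\delta$ over $K$ with the same underlying $K$-structure --- only the Gröbner staircase changes.'' This is false: when the log-radii change, the dimension of the quotient can drop. The paper's own example ($I = (2x^2-y^2,\, y^3-x^2)$ in $\Q_2[x,y]$, staircase of size $6$, versus $I\cdot\Q_2\{x,y\}$, staircase $\{1,y\}$ of size $2$) shows this. The reason is structural, not a matter of precision or truncation: factors such as $1-aX_i$ with $\val(a)>0$ become invertible in the larger algebra, so the map $\Phi: V \to W$ is surjective but has a nonzero kernel $N$. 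Consequently your walk, performed entirely inside the fixed $\delta$-dimensional space $V$ with the source multiplication matrices, computes relations modulo $I$ but can never detect the elements of $N$, and therefore does not produce a Gröbner basis of $I\cdot\KX[\u]$. The missing ingredient is precisely the paper's Proposition~\ref{prop:N_in_the_bigTi}: $N = \sum_{i} \bigsp_{u_i}(T_i)$, computed by slope factorization of the characteristic polynomials of the $T_i$, followed by a saturation step (Lemma~\ref{lem:saturate}) to obtain a $\Kz$-basis of the unit ball of $W$ so that one can reduce modulo $\pi$ and invoke Theorem~\ref{theo:GB}. Your ``truncation bound'' for monomials of large $\u$-valuation cannot substitute for this, since membership in $N$ is an exact, infinite-precision phenomenon.

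A second, smaller but still real gap is that you treat the multiplication matrices as ``read off from $G$.'' In the Tate setting this is exactly where the work lies: because the term order mixes valuation and degree, the normal form of a boundary monomial can involve \emph{all} staircase monomials, including larger ones, so the classical triangular one-pass computation fails. The paper resolves this with an iterative (or recursive/relaxed) lifting modulo successive powers of $\pi$, using the fact (Lemma~\ref{prop:structure-r0}) that the offending larger monomials carry coefficients of positive valuation; this step costs $O(\epsilon\delta^2\,\prec)$ and is where the factor $\prec$ genuinely enters the arithmetic complexity, together with the handling of general log-radii via the change of variables $X_i \leftarrow \eta^{r_i}Y_i$ to prove termination. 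Your complexity count, which charges only matrix--vector products and echelonization during the walk, silently assumes these matrices are free, and your concern about pivot valuations, while legitimate, is addressed in the paper not by a pivot lemma but by doing the FGLM reconstruction over the residue field $\bar K$ on the unit ball of $W$ and then lifting by solving one linear system.
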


We underline that, although the classical FGLM 
algorithm only concerns change of ordering, our version also permits 
to change the radii of convergence of the underlying Tate algebra
(namely the parameters $\r$ and $\u$), 
and then provides efficient tools for dealing with the aforementioned 
overconvergence situation. In the extreme case where $\r$ is infinite,
it makes effective the bridge between polynomials and Tate series,
that is between classical algebraic geometry and rigid geometry.
On a different note, being able to perform such a change of ordering 
opens up algorithmic strategies for overconvergent series, by giving 
freedom in the choice of the convergence radii.

An additional important outcome of our algorithm is that it can be
slightly modified in order to accept certain nonreduced Gröbner 
bases as input. Hence, in many cases, calling it with the 
same radii of convergence and the same ordering as input and output, 
already performs a nontrivial operation: the interreduction of the 
input Gröbner basis. Moreover, it has a controlled 
complexity and performs actually very well in practice (contrarily 
to the naive reduction algorithm).
Since the intermediate interreduction of Gröbner bases is often
the bottleneck in Buchberger and signature algorithms in the Tate
setting, using our FGLM algorithm (or an adaptation of it) at
this step could lead to a significant speed-up.

\subsubsection*{Strategy and ingredients}

In the classical setting, the key step of the FGLM algorithm is to convert back and forth between Gröbner bases and the so-called multiplication matrices, which are defined as the multiplication maps by the variables in the quotient space. Performing the change of ordering on those multiplication matrices then reduces to basic linear algebra.
Still in the classical case, thanks to the structure of normal forms, it can be shown that all steps can be done in sub-cubic time in the number of solutions.

In the Tate setting, Gröbner bases are defined using a term ordering, taking into account both a monomial ordering in the usual sense and a weight taking into account the degree of the monomials, the valuation of the coefficient and the convergence radius of the series in the algebra.
It is the reason why we will eventually be able to change all these 
parameters at the same time. However, this feature also implies new
difficulties.

Firstly, in the construction of the multiplication matrices, the structure of the normal forms does not allow us to read the values in one pass.
Instead, we prove that an iterative process converges to the correct value of the matrices, and we show how this process can be done in different ways, including, for some particular base fields,
the option of using relaxed arithmetic~\cite{vdH:1997,BvdHL:2011}, which eventually leads to a significant improvement of the efficiency.

Secondly, if the change of ordering incurs a change of convergence radii, the size of the quotient algebra might change.
We show that it is possible to recover multiplication matrices over the correct quotient by separating eigenspaces depending on the valuation of the eigenvalues.
The reconstruction of the final Gröbner basis is finally achieved using the classical strategy in the residue field, and then lifting the basis.

\subsubsection*{Organization of the article}

In Section~\ref{sec:setting}, we introduce the notations and
discuss some primitives of linear algebras over nonarchimedian fields
which will be used repeatedly later on.
The computation of multiplication matrices is addressed in
Section~\ref{sec:mult_mat_for_Tate_alg}.
In Section~\ref{sec:FGLM_for_Tate_alg}, we consider the question 
of changing radii of convergence and design our final algorithm.


\section{Setting and preliminaries}
\label{sec:setting}

Throughout this article, we consider a field $K$ equipped with a 
discrete valuation $\val$ for which it is complete.
We denote its ring of integers by $\Kz$ and fix a uniformizer
$\pi$ of $K$. The quotient $\Kz/\pi$ is called the residue field
of $K$ and will be denoted by $\bar K$ in what follows.
Classical examples of such fields are $K = \mathbb{Q}_p$ (equipped 
with the $p$-adic valuation) and $k(\!(T)\!)$ (equipped with the $T$-adic 
valuation) for any base field $k$.

The complexity statements are given with the usual asymptotic notations $O(f)$ and $\softO(f) = O(f \log(f)^{n})$ for some $n$.

We will consider two different models of complexity: arithmetic 
complexity, counting operations in $K$ or $\Kz$, and base complexity, 
taking into account the precision.
In the case of equal characteristic (\emph{i.e.} $\text{char } K = 
\text{char } \bar K$), such as $k(\!(T)\!)$, the base complexity 
counts operations in the residue field, and the correspondence 
between both models satistfies:
\begin{equation}
  \label{eq:4}
  \text{(Arithmetic complexity)} = 
  \softO\big(\text{(Base complexity)} \cdot \prec\big).
\end{equation}
where $\text{prec}$ stands for the working precision.
On the contrary, in the case of mixed characteristic, such as 
$\QQ_{p}$, the base 
complexity counts bit operations. When the residue field is
finite, the correspondence between both models satisfies:
\begin{equation}
  \label{eq:5}
  \text{(Arithmetic complexity)} =
  \softO\big(\text{(Base complexity)} \cdot \prec \cdot \log |\bar K|\big).
\end{equation}

\subsection{Tate algebras and ideals}
\label{subsec:Tate_alg}

In order to fix notations, we briefly recall the definition of Tate 
algebras and the theory of Gröbner bases over them.
Let $\r = (r_{1},\dots,r_{n}) \in \QQ^{n}$.
The \emph{Tate algebra} $\KX[\r]$ is defined by:
  \begin{equation}
    \label{eq:1}
    \KX[\r] := \left\{ \sum_{\mathbf{i} \in \NN^{n}} a_{\mathbf{i}}\X^{\mathbf{i}} 
     \text{ s.t. }
     a_{\mathbf{i}}\in K \text{ and } 
     \val(a_\i) - \r{\cdot}\i \xrightarrow[|\i| \rightarrow +\infty]{} +\infty
    \right\}
  \end{equation}
The tuple $\mathbf{r}$ is called the convergence log-radii of the 
Tate algebra.
We define the Gauss valuation of a term 
$a_{\mathbf{i}}\X^{\mathbf{i}}$ as 
$\val(a_{\mathbf{i}}\X^{\mathbf{i}}) = \val(a_\i) - \r{\cdot}\i$, and 
the Gauss valuation of $\sum a_{\mathbf{i}}\mathbf{X}^{\mathbf{i}} 
\in \KX[\r]$ as the minimum of the Gauss valuations of its terms.
The integral Tate algebra ring $\KzX[\r]$ is the subring of $\KX[\r]$ 
consisting of elements with nonnegative valuation.
In what follows, when $\r = (0, \ldots, 0)$, we will simply write 
$\KX$ instead of $\KX[(0,\ldots,0)]$.

We fix a classical \emph{monomial order} $\leq_m$ on the set of
monomials $\X^{\mathbf{i}}$. 
Given two terms $a \X^\i$ and $b \X^{\mathbf{j}}$  (with $a,b \in 
K^\times$), we write $a \X^\i < b \X^{\mathbf{j}}$ if
$\val(a\X^\i) > \val(b\X^{\mathbf{j}})$, or
$\val(a\X^\i) = \val(b\X^{\mathbf{j}})$ 
and $\X^{\mathbf{i}} <_m \X^{\mathbf{j}}$.
The leading term of a Tate series $\sum 
a_{\mathbf{i}}\mathbf{X}^{\mathbf{i}} \in \KX[\r]$ is, by
definition, its maximal term.

A Gröbner basis of an ideal $I$ of $\KX[\r]$ is,
by definition, a family $(g_1, \ldots, g_s)$ of elements of $I$
with the property that for all $f \in I$, there exists an index
$i \in \{1, \ldots, s\}$ such that $\LT(g_i)$ divides $\LT(f)$.
A Gröbner basis $(g_1, \ldots, g_s)$ is \emph{reduced} if given a term $t$
of $g_i$ which is not the leading term, $t$ is not divisible
by any $\LT(g_j)$. The following theorem is proved in~\cite{CVV}.

\begin{theo}
\label{theo:GB}
\begin{enumerate}
\item
Any ideal of $\KX[\r]$ admits a Gröbner basis.
\item
If $\r = (0, \ldots, 0)$ and $I$ is an ideal of $\KX$, a family
$G = (g_1, \ldots, g_s)$ consisting of elements of $\KX$ with Gauss 
valuation~$0$ is a Gröbner basis of $I$ if and only if its reduction
modulo~$\pi$ is a classical Gröbner basis of the quotient ideal
$(I \cap \KzX[\r]) / \pi(I \cap \KzX[\r])$ of $\bar K[\X]$ for $\leq_m$.
\end{enumerate}
\end{theo}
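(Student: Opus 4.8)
The plan is to prove the two directions of the equivalence by carefully tracking the relationship between leading terms in the Tate algebra $\KX$ and leading terms in the classical polynomial ring $\bar K[\X]$. The key observation is that when $\r = (0,\ldots,0)$, the Gauss valuation of a term $a\X^\i$ is just $\val(a)$, so a term $a\X^\i$ with $\val(a) = 0$ reduces modulo $\pi$ to a nonzero term $\bar a\X^\i$ of $\bar K[\X]$, and the term ordering $<$ on terms of Gauss valuation $0$ descends to exactly the monomial order $\leq_m$ on $\bar K[\X]$. Hence, for any $f \in \KzX$ of Gauss valuation $0$, we have $\overline{\LT(f)} = \LT(\bar f)$ where the right-hand leading term is computed in $\bar K[\X]$ with respect to $\leq_m$. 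I would isolate this as the first lemma of the argument.

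Next I would set up notation for the quotient ideal. Write $J := I \cap \KzX$; this is an ideal of $\KzX$, and since $\KzX$ is $\pi$-adically complete and separated (a standard property of integral Tate algebras, established in~\cite{CVV}), the reduction $\bar J := J/\pi J$ is an ideal of $\bar K[\X]$. One should first check that $\bar J$ is well-defined and that reduction modulo $\pi$ is compatible with leading terms on all of $J$: every element of $J$ can, after dividing by a suitable power of $\pi$, be assumed to have Gauss valuation $0$, and then the first lemma applies. The forward direction then goes as follows: if $G = (g_1,\ldots,g_s)$ with $\val(g_i) = 0$ is a Gröbner basis of $I$, take any $\bar f \in \bar J$, lift it to $f \in J$ of Gauss valuation $0$; since $f \in I$, some $\LT(g_i)$ divides $\LT(f)$ in $\KX$, and reducing mod $\pi$, $\LT(\bar g_i) = \overline{\LT(g_i)}$ divides $\overline{\LT(f)} = \LT(\bar f)$ in $\bar K[\X]$. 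So $(\bar g_1,\ldots,\bar g_s)$ is a classical Gröbner basis of $\bar J$.

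For the converse — which I expect to be the main obstacle — suppose $(\bar g_1,\ldots,\bar g_s)$ is a classical Gröbner basis of $\bar J$ for $\leq_m$. I need to show every $f \in I$ has $\LT(f)$ divisible by some $\LT(g_i)$. The difficulty is that $f$ need not lie in $J$: it lies in $I$, but its Gauss valuation could be negative. However, multiplying by a power of $\pi$ we may assume $\val(f) \geq 0$, i.e.\ $f \in J$, and since the divisibility condition on leading terms is unaffected by scaling by $\pi$, this reduction is harmless. Now with $f \in J$ of Gauss valuation $0$, $\bar f \in \bar J$ is nonzero, so some $\LT(\bar g_i)$ divides $\LT(\bar f)$; since $\val(g_i) = 0$ and these reductions compute leading terms correctly, $\LT(g_i)$ divides $\LT(f)$ in $\KX$, as desired. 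The subtle points to nail down carefully are: (a) that we may always reduce to Gauss valuation $0$ without losing elements of $\bar J$ — this requires knowing $J = I \cap \KzX$ generates $\bar J$ faithfully, which follows from $\KzX$ being $\pi$-adically separated so that no element of $J$ of positive valuation reduces to something escaping $\bar J$; (b) that the leading term map is genuinely well-behaved under reduction, for which one must check that reducing mod $\pi$ does not create cancellation in the leading term, precisely guaranteed by the Gauss valuation $0$ hypothesis; and (c) the minor but necessary fact that the $g_i$ indeed lie in $I$ and their reductions in $\bar J$, so that "$G$ is a Gröbner basis of $I$" is even a meaningful statement — here one uses that the $g_i$ have Gauss valuation $0$ and belong to $J$. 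The completeness of $\KzX$ is what makes all of this go through; in the classical polynomial setting the analogous statement is immediate, and the entire content of part (2) is that passing to convergent power series does not break this dictionary.
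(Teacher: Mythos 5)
A preliminary remark: the paper does not prove Theorem~\ref{theo:GB} at all --- it is quoted from~\cite{CVV} --- so your proposal can only be judged on its own terms, not against an in-paper argument.

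For part~(2), your plan is essentially the right dictionary and would work: the key lemma that for $\r=(0,\dots,0)$ the term order restricted to Gauss-valuation-$0$ terms descends to $\leq_m$, hence $\overline{\LT(f)}=\LT(\bar f)$ for $f$ of Gauss valuation $0$, together with the observation that $(I\cap\KzX)\cap\pi\KzX=\pi(I\cap\KzX)$ (so a lift of a nonzero class automatically has Gauss valuation $0$, and $J/\pi J$ injects into $\bar K[\X]\simeq\KzX/\pi\KzX$), reduces both implications to a pointwise comparison of leading terms. One expository slip: you repeatedly attribute the heavy lifting to $\pi$-adic completeness of $\KzX$, but nothing in your argument uses completeness --- what is actually used is the elementary fact that $I$ is a $K$-subspace (so one may divide by $\pi$) plus the identification $\KzX/\pi\KzX\simeq\bar K[\X]$, which for $\r=0$ follows from the convergence condition forcing all but finitely many coefficients into $\pi\Kz$. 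Stating this correctly matters, because the final sentence of your proposal suggests the wrong mechanism.

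The genuine gap is part~(1): the theorem also asserts that \emph{every} ideal of $\KX[\r]$, for arbitrary log-radii $\r$ and with no $0$-dimensionality hypothesis, admits a Gröbner basis, and your proposal does not address this at all. It does not follow from part~(2): the equivalence in (2) presupposes a candidate family $G$ and only concerns $\r=(0,\dots,0)$, whereas (1) is an existence statement requiring its own argument (in~\cite{CVV} it rests on a Noetherianity/Dickson-type finiteness result for the monoid of leading terms, taking into account both the monomial and the valuation of the coefficient, together with the compatibility between $\KX[\r]$ and $\KzX[\r]$). As it stands, the proposal proves at most half of the stated theorem.
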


In the present article, we will be particularly interested in
$0$-dimensional ideals. By definition, $I$ is such an ideal if
the quotient $\KX[\r]/I$ is a finite dimensional $K$-vector
space.
If $I$ is a $0$-dimen\-sional ideal, the set:
$$B = \big\{\, \X^{\mathbf i} 
\text{ with } i \in \N^n
\text{ and } \X^{\mathbf i} \not\in \LT(I) \, \big\}$$
is finite and forms a $K$-basis of $\KX[\r]/I$. 
It is called the \emph{staircase} of $G$. Moreover, if we are
given a Gröbner basis $(g_1, \ldots, g_s)$ of $I$, the staircase $B$
consists of all monomials $\X^{\mathbf i}$ which are not divisible by
any $\LT(g_j)$ for $j$ varying in $\{1, \ldots, s\}$.
This observation implies in particular that any \emph{reduced} Gröbner basis
of a $0$-dimensional ideal $I$ consists only of polynomials.

\subsection{Linear algebra}
\label{ssec:linalg}

It is an understatement to say that the FGLM strategy relies 
heavily on linear algebra. In the Tate setting, this assertion
is even more true and new basic operations in linear algebra,
which are specific to non-archimedean base fields, will be
needed.
The aim of this subsection is to review briefly these operations.

\subsubsection*{Slope decomposition.}

Let $V$ be a finite $K$-dimensional vector space and let
$f : V \to V$ be a $K$-linear mapping.
Let $\chi_f$ be the characteristic polynomial of $f$.
Given an auxiliary real number $s$, one can factor $\chi_f$ as
a product
$\chi_f = \chi_{\bigsp,s, f} \times \chi_{\smallsp,s, f}$
where $\chi_{\bigsp,s, f}$ (resp. $\chi_{\smallsp,s,f}$) is the 
factor corresponding to all roots (in an algebraic closure) of 
valuation $<s$ (resp. valuation $\geq s$).
Moreover, both $\chi_{\bigsp,s, M}$ and $\chi_{\smallsp,s, M}$
have coefficients in $K$.
Letting $\bigsp_s(f)$ denote the kernel of $\chi_{\bigsp,s, f}(f)$
and $\smallsp_s(f)$ denote that of $\chi_{\smallsp,s, f}(f)$, the
above factorization corresponds to a decomposition of $V$ as a 
direct sum
$V = \bigsp_s(f) \oplus \smallsp_s(f)$.
Computing efficiently this decomposition is a basic task in linear 
algebra over non-archimedean fields.

In this article, we assume that we are given a routine
\texttt{Big} which takes as input $(f,s)$ and outputs (a basis of)
the subspace $\bigsp_s(f)$.
A naive implementation of the procedure \texttt{Big} is reported in 
Algorithm~\ref{algo:big}. It has cubic complexity in the dimension 
of $V$ (which will be enough for our applications) but has the
advantage of being numerically stable.

\begin{algorithm}[t]
  \caption{\texttt{Big}$(f,s)$} \label{algo:big}
  \Input{$f \in K^{ \delta \times \delta}$, $s \in \R.$}
  \Output{A basis $S$ of $\bigsp_s(f)$.}
  $\chi_f \leftarrow \texttt{charpoly}(f)$%
    \tcp*{use \cite{KV:2004} or \cite{CRV:2017}}
  Write $\chi_f = \chi_{\smallsp,s, f} \chi_{\bigsp,s,f}$%
    \tcp*{use \cite{CRV:2016}}
  $g \leftarrow \chi_{\bigsp,s, f}(f)$%
    \tcp*{use \cite{PS:1973}}
  $S \leftarrow \ker g$%
    \tcp*{use pnumerical kernel from \cite[\S 2.2.1]{KV:2020}}
\Return{$S$}  
\end{algorithm}

\subsubsection*{$\Kz$-modules and saturation.}

As before, we let $V$ be a finite dimensional $K$-vector space.
We recall basic facts about sub-$\Kz$-modules of $V$ and their
algorithmic.
If $V$ is equipped with a distinguished basis, one can represent
a finitely generated sub-$\Kz$-module of $V$ by the matrix $M$ 
whose columns are the generators of $L$.
Performing column reduction, one can always assume that $M$ is 
under Hermite normal form. With this additional assumption, it is 
uniquely determined by $L$.

Let $L_1$ and $L_2$ be sub-$\Kz$-modules of $V$, represented by the
square matrices $M_1$ and $M_2$ respectively. The sum $L = L_1
+ L_2$ is then generated by the columns of the block matrix:
$$M = \left( \begin{array}{@{\hspace{0.5ex}}c@{\hspace{0.5ex}}c@{\hspace{0.5ex}}c@{\hspace{0.5ex}}}
M_1 & | & M_2 
\end{array} \right)$$
Computing the Hermite normal form of $M$, one obtains a 
canonical matrix representating $L$.
The cost of this computation is cubic in the dimension of $V$
with a naive algorithm.

We now assume that we are given a sub-$\Kz$-module $L \subset V$ 
together with a $K$-linear endomorphism $f : V \to V$.
The \emph{saturation} of $L$ with respect to $f$ is the 
sub-$\Kz$-module of $V$ defined by:
$$\Sat_f(L) = L + f(L) + f^2(L) + \cdots + f^n(L) + \cdots$$

\begin{lem}
\label{lem:saturate}
We assume that $V = \smallsp_0(f)$. Then:
$$\Sat_f(L) = L + f(L) + f^2(L) + \cdots + f^{\dim V-1}(L).$$
In particular, if $L$ is finitely generated, then $\Sat_f(L)$ is 
also.
\end{lem}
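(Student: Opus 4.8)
The plan is to show that the chain $L \subseteq L + f(L) \subseteq L + f(L) + f^2(L) \subseteq \cdots$ stabilizes after at most $\dim V - 1$ steps, so that the infinite sum defining $\Sat_f(L)$ is actually a finite sum. The key observation is that the hypothesis $V = \smallsp_0(f)$ means every eigenvalue of $f$ (in an algebraic closure) has nonnegative valuation, hence the characteristic polynomial $\chi_f$ has coefficients in $\Kz$, and by Cayley--Hamilton $\chi_f(f) = 0$ as an endomorphism of $V$.

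First I would write $\chi_f(X) = X^d + c_{d-1} X^{d-1} + \cdots + c_1 X + c_0$ with $d = \dim V$ and all $c_i \in \Kz$; this integrality of the coefficients is exactly what $V = \smallsp_0(f)$ buys us (the elementary symmetric functions of integral elements are integral). From $\chi_f(f) = 0$ we get, for every $v \in V$,
$$f^d(v) = -c_{d-1} f^{d-1}(v) - \cdots - c_1 f(v) - c_0 v,$$
and since all $c_i \in \Kz$, the right-hand side lies in $L + f(L) + \cdots + f^{d-1}(L)$ whenever $v \in L$. Hence $f^d(L) \subseteq L + f(L) + \cdots + f^{d-1}(L)$. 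Next I would propagate this upward: applying $f$ and using that $f$ maps each $f^i(L)$ into $f^{i+1}(L)$, one shows by induction on $k \geq d$ that $f^k(L) \subseteq L + f(L) + \cdots + f^{d-1}(L)$. Therefore
$$\Sat_f(L) = \sum_{k \geq 0} f^k(L) = L + f(L) + \cdots + f^{d-1}(L),$$
which is the claimed equality. The final sentence of the lemma is then immediate: if $L$ is generated by finitely many vectors $w_1, \ldots, w_m$, then $\Sat_f(L)$ is generated by the finite family $\{f^i(w_j) : 0 \le i \le d-1,\ 1 \le j \le m\}$.

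The only subtle point — and the part worth stating carefully rather than the routine induction — is the integrality of the coefficients of $\chi_f$, i.e. the claim that $V = \smallsp_0(f)$ forces $\chi_f \in \Kz[X]$. This follows because over the algebraic closure each root $\lambda$ of $\chi_f$ satisfies $\val(\lambda) \geq 0$ by definition of $\smallsp_0(f)$, so each coefficient $c_i$, being $\pm$ an elementary symmetric polynomial in the $\lambda$'s, has $\val(c_i) \geq 0$; since moreover $c_i \in K$, we get $c_i \in \Kz$. Everything else is a direct consequence of Cayley--Hamilton and the fact that the $f^i(L)$ form an increasing-by-application-of-$f$ filtration, so no genuine obstacle remains.
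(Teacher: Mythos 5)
Your proposal is correct and follows essentially the same route as the paper: the hypothesis $V = \smallsp_0(f)$ forces $\chi_f \in \Kz[X]$, and Cayley--Hamilton then expresses $f^{\dim V}$ as a $\Kz$-linear combination of lower powers, which bounds the sum. You simply spell out the integrality of the coefficients and the upward propagation in more detail than the paper does.
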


\begin{proof}
The assumption on $f$ implies that the coefficients of $\chi_f$ 
are all in $\Kz$. From Cayley-Hamilton theorem, we deduce that
$f^\delta$ is a linear combination with coefficients in $\Kz$
of the $f^i$'s with $i < \delta$. The lemma follows.
\end{proof}

The routine \texttt{Saturate} presented in 
Algorithm~\ref{algo:saturate} computes $\Sat_f(L)$ under
the assumption that $L$ is finitely generated  and $V = 
\smallsp_0(f)$. 
Indeed, one checks by
induction that after the $k$-th iteration of the loop, one has
$g = f^{2^k}$ and:
$$S = L + f(L) + f^2(L) + \cdots + f^{2^k-1}(L).$$
Therefore, when $2^k \geq \delta$, we find $S = \Sat_f(L)$.
The complexity of Algorithm~\ref{algo:saturate} is equal to the
cost of $O(\log \delta)$ Hermite reductions. If we use the naive
algorithm for this task, we obtain an algorithm of arithmetic
complexity $\softO(\delta^3)$.

\begin{algorithm}[t]
  \caption{\texttt{Saturate}$(f, L)$} \label{algo:saturate}
  \Input{a $K$-linear map $f : V \to V$ s.t. $V = \smallsp_0(f)$,\\
         a finitely generated $\Kz$-module $L \subset V$}
  \Output{$\Sat_f(L)$}

  $S \leftarrow L$;\, $g \leftarrow f$\;
  $w \leftarrow \lceil \log_2 \dim V \rceil$\;
  \For{$k \in \llbracket 1, w \rrbracket$} {
    \label{algo:line:power_it_5}
    $S \leftarrow S + g(S)$\;
    $g \leftarrow g^2$\;		 
  }  
  \Return{ $S$ }  
\end{algorithm}

\begin{rem}
For a general $K$-linear mapping $f$, one always has:
$$\Sat_f(L) = \bigsp_0(f) + 
L + f(L) + f^2(L) + \cdots + f^{\dim V -1}(L)$$
provided that $L$ spans $V$ as a $K$-vector space. Under this
assumption, one can then combine Algorithms~\ref{algo:big}
and~\ref{algo:saturate} to compute the saturation of $L$
with respect to $f$ even when $\smallsp_0(f) \subsetneq V$.
\end{rem}

\section{Multiplication matrices}
\label{sec:mult_mat_for_Tate_alg}

Throughout this section, we fix a tuple $\r = (r_1, \ldots, r_n)$
and consider the Tate algebra $\KX[\r]$. We consider in addition
a $0$-dimensional ideal $I$ of $\KX[\r]$ and assume that we are
given a Gröbner basis $G = (g_1, \ldots, g_s)$ of $I$.

The first step in the FGLM algorithm is the computation of the
matrices of multiplication by the 
variables on the quotient $\KX[\r]/I$ (which has finite dimension by 
assumption). We recall that a $K$-basis of $\KX[\r]/I$ is given by 
the staircase $B$, which consists of all monomials $m$ with 
$m \not\in \LT(I)$.
We let $T_i$ be the matrix of the multiplication by $X_{i}$ with
respect to this basis.
Observe that the $(\mu, m)$-entry of $T_i$ has valuation at least:
$$v_{\mu, m} =  \val(mX_i) - \val(\mu)
             =  \val(m) - r_i - \val(\mu).$$
Our goal is to design an algorithm for computing the $T_i$'s.
In order to express our complexity estimates, we introduce two
important parameters. This first one is the degree of the ideal $\delta = |B| =
\dim \KX[\r]/I$. The second one, denoted by $\epsilon$, is the
size of the boundary of the staircase defined as $\bar B
\setminus B$ with
$$\bar{B} = 
\big\{\,X_{i} m \, :\, i \in \{1,\dots,n\}, m \in B\,\big\}.$$
Obviously the cardinality of $\bar B$ is at most $n\delta$; thus
$\epsilon \leq n \delta$ as well.

The theorem we are going to prove is the following (we refer to 
the beginning of Section~\ref{sec:setting} for the definition of 
the arithmetic and base complexity).

\begin{theo}
There exists an algorithm that takes as input a reduced Gröbner
basis of $G$ and outputs the multiplication matrices $T_i$ with
$(\mu,m)$-entry known at precision
$O(\pi^{\text{\rm prec} + v_{\mu,m}})$ for 
a cost of $O(\epsilon \delta^2 \text{\rm prec})$ arithmetic 
operations.

Besides, if the base field $K$ is either a Laurent series field
or $\Qp$, the above complexity can be lowered to 
$\softO(\epsilon \delta^2 \text{\rm prec})$ base operations.
\end{theo}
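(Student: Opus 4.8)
The plan is to compute the multiplication matrices $T_i$ by an iterative refinement, starting from a coarse approximation extracted from the leading-term structure of $G$ and correcting it by repeated reduction of the boundary monomials.

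First I would set up the initial data. For each boundary monomial $X_i m \in \bar B \setminus B$, the reduced Gröbner basis $G$ contains (possibly after multiplication by a monomial) a relation expressing $X_i m$ modulo $I$, but in the Tate setting the ``tail'' of that relation has terms of strictly larger Gauss valuation, so reducing $X_i m$ is not a one-step operation as in the classical case: the tail may involve monomials outside $B$, which themselves need to be rewritten. The key observation is that each such rewriting strictly increases the Gauss valuation of the terms we still have to deal with, so the process converges $\pi$-adically. I would phrase this as: define a linear operator on the (finite-dimensional) span of $\bar B$ that, given a putative column of $T_i$, reduces the residual boundary terms one more level; show this operator is a contraction for the $\pi$-adic norm with a contraction factor governed by the minimal gap in Gauss valuations appearing in the tails of $G$; conclude that after $O(\prec)$ iterations the entries $T_i[\mu,m]$ are known to precision $O(\pi^{\prec + v_{\mu,m}})$, matching the claimed valuation bookkeeping from the inequality $v_{\mu,m} = \val(m) - r_i - \val(\mu)$ stated just before the theorem.

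Next I would account for the cost. Each iteration amounts to multiplying the current $\epsilon \times \delta$ residual matrix (indexed by boundary monomials times staircase monomials) by fixed sparse data coming from $G$, which is $O(\epsilon \delta^2)$ arithmetic operations per iteration at a single precision level; doing this to precision $\prec$ and over $O(\prec)$ refinement steps — or, more carefully, observing that the $k$-th correction only needs to be computed to precision $\prec - k$ — gives the total $O(\epsilon \delta^2 \prec)$ arithmetic operations. For the base-complexity refinement I would invoke relaxed (online) arithmetic over Laurent series fields and over $\Qp$, as alluded to in the introduction and cited to \cite{vdH:1997,BvdHL:2011}: because each coefficient of the answer depends only on lower-precision data of the inputs, the whole iteration fits the relaxed computation paradigm, and the $\softO$ saving of a factor $\prec$ relative to the naive ``recompute at full precision each step'' approach is exactly what relaxed arithmetic buys, yielding $\softO(\epsilon \delta^2 \prec)$ base operations.

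The main obstacle I anticipate is proving convergence cleanly and with the right valuation normalization: one must show that the tails of the reduced Gröbner basis elements, once one divides the relation for $X_i m$ by the appropriate monomial, really do have strictly positive Gauss-valuation gap, and that this gap is uniform enough (over all boundary monomials and over all iterations) to guarantee that precisely $\prec + v_{\mu,m}$ digits of each entry stabilize after a number of iterations that is $O(\prec)$ and not worse. A subtlety here is that intermediate rewrites can temporarily involve monomials of large degree outside $\bar B$; I would handle this by bounding the degrees that can occur (they are controlled by the staircase and the supports of $G$) so that the ambient space of the iteration stays of dimension $O(\delta + \epsilon)$, and by being careful that the convergence argument uses the term order's tie-breaking by $\leq_m$ only to well-order the finitely many monomials at each Gauss-valuation level. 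Once the contraction is established, the complexity claims and the relaxed-arithmetic speed-up follow in a routine way.
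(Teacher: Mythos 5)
Your overall skeleton --- iterate an FGLM-style pass, gain a fixed amount of $\pi$-adic precision per pass because the offending terms carry strictly positive valuation, and invoke relaxed arithmetic for the base-complexity claim --- is the same as the paper's (Algorithm~\ref{algo:mult_mat_from_TateGB}, Theorem~\ref{prop:termin-r0}, Lemma~\ref{prop:structure-r0}, and the recursive variant of Algorithm~\ref{algo:mult_mat_rec} for the relaxed-arithmetic statement). But the crucial step is misdescribed in a way that breaks the complexity bound. You phrase each pass as direct reduction of the boundary monomials by $G$, driven by ``fixed sparse data coming from $G$'', and you propose to handle intermediate monomials outside $\bar B$ by a degree bound keeping the ambient space of dimension $O(\delta+\epsilon)$. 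No such bound holds: in the Tate order a tail term of $g$ may have valuation $>0$ and arbitrary monomial, so once you multiply a relation by a cofactor monomial, successive direct reductions can wander through monomials far outside $\bar B$ (this is exactly why naive reduction is expensive here). The paper never leaves the span of $B$: since $G$ is reduced, the tails of the $g_j$ are supported on $B$, and when $X_i m$ is neither in $B$ nor a leading monomial one writes $X_i m = X_j(X_i m')$ and updates $T_i\cdot[m]\leftarrow T_j\cdot(T_i\cdot[m'])$ using the \emph{current, partially known} matrices; correctness is a double induction on the precision $k$ and on $X_i m$ in the term order, Lemma~\ref{prop:structure-r0} providing the extra factor $\pi$ whenever a coefficient known only to precision $k-1$ enters a product. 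This also settles your worry about a ``uniform contraction factor'' when $\r=(0,\dots,0)$: the valuation is discrete, so the relevant coefficients are divisible by $\pi$ and each pass gains a full digit.

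The second gap is the general log-radii case, which the theorem covers (note the $r_i$ in $v_{\mu,m}$). A contraction argument based on ``the minimal gap in Gauss valuations in the tails of $G$'' gains only $1/D$ per pass, $D$ being the common denominator of the $r_i$, hence $O(D\,\prec)$ passes and not the stated $O(\epsilon\delta^2\prec)$. The paper instead tracks the precision of each entry relative to $\floor{v_{\mu,m}}$ and proves termination of the recursive scheme via an abstract change of variables $X_i\leftarrow \eta^{r_i}Y_i$ over $L=K[\eta]$, $\eta^D=\pi$: the dependency graph of the computation embeds into the cycle-free graph of the trivial-log-radii case, while \cite[Prop.~3.10]{CVV} guarantees all coefficients stay in $\Kz$, so no arithmetic over $L$ is actually performed. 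Without this ingredient (or an equivalent), your argument only establishes the theorem for $\r=(0,\dots,0)$.
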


We will also present an algorithm accepting as input certain 
nonreduced Gröbner basis. This variant is interesting because, 
in some cases, it will eventually provide a fast algorithm for 
interreducing Gröbner basis.

\subsection{Iterative algorithm}
\label{sec:computation-ala-fglm}

Throughout this subsection, we assume that $G = (g_1, \ldots, 
g_s)$ is \emph{reduced} and $\r = (0, \ldots, 0)$. We will explain 
later on how these assumptions can be relaxed.
For simplicity, we assume in addition that the $g_i$'s are all
monic (\emph{i.e.} the coefficients of their leading terms are $1$). 
This hypothesis is of course harmless since renormalizing the 
$g_i$'s and making them monic does not affect the fact that $G$
is a Gröbner basis.

Computing the $T_i$'s amounts to computing the normal forms of $m$ modulo $I$ for all $m$ in $\bar B$.
In a classical setting, this can be done iteratively with linear algebra, by considering the monomials following the monomial order.
Indeed, for $m$ in $B$ and $i \in \{1,\dots,n\}$, if $X_{i}m \notin B$, either $X_{i}m$ is a leading monomial in $G$, or there exists $\mu \notin B$ such that $X_{i}m = X_{j}\mu$, and then $\NF(X_{i}m) = X_{j}\NF(\mu)$.
In the classical setting, the normal form of a monomial $\mu$ only involves monomials in $B$ strictly smaller than $\mu$, so $X_{j}\NF(\mu)$ only involves monomials in $\bar{B}$ strictly smaller than $X_{i}m$.
This allows to write $\NF(X_{i}m)$ as a linear combination of already computed normal forms.

In the case of Tate term orderings, similarly to what was observed for example for tropical orderings~\cite{IVY20}, the normal form of a monomial $\mu$ can involve all monomials of $B$, and computing the wanted normal forms \emph{a priori} requires solving a large nonlinear system of equations.

However, because Tate Gröbner basis are just classical Gröbner basis when they are reduced 
modulo $\pi$ (Theorem~\ref{theo:GB}), the above strategy allows to recover the value of the 
multiplication matrices modulo $\pi$.
Following the same computations again lifts the multiplication matrices to coefficients in $\Kz/\pi^{2}$, and so on and so forth.

\newcommand{\vvect}[1]{[#1]}

The algorithm formalizing this idea is described in Algorithm~\ref{algo:mult_mat_from_TateGB}.
\begin{algorithm}[t]
    \caption{\texttt{MulMat\_iter}$(G, \text{prec})$} 
    \label{algo:mult_mat_from_TateGB}
  \Input{a reduced Gröbner basis $G$ of the ideal $I \subset \KX$,\\
         an integer $\prec$ such that all elements of $G$\\
         are known at precision $\prec$}
  \Output{$T_1,\dots,T_n$, the multiplication matrices over $\KX/I$
    (w.r.t the basis $B$) modulo $\pi^{\prec}$}
  $B \leftarrow \{m \text{ monomials not divisible by any } \LT(g), g \in G
  \}$\;
  $T_{i} = (c_{i,m,m'}) \leftarrow$ zero matrices of size $\delta \times \delta$, with rows and columns indexed by $B$, for all $i \in \llbracket 1,n \rrbracket$\;
  \For{$k$ from $0$ to $\prec$\label{algo:line:beg-T}}{
  \For{$i \in \llbracket 1,n \rrbracket$, $m \in B$ in increasing order of $X_{i}m$
  }{
    \uIf{$X_{i}m \in B$}{
      $T_{i} \cdot \vvect{m} \leftarrow \vvect{X_{i}m}$\label{algo:eq-simple}\;
    }
    \uElseIf{$X_{i}m = \LT(g)$ for some $g \in G$}{
      $T_{i} \cdot \vvect{m} \leftarrow \vvect{g - \LT(g)}$\label{algo:eq-gb}\;
    }
    \Else{
      Write $m = X_{j}m'$ for some $m' \notin B$\;
      $T_{i} \cdot \vvect{m} \leftarrow T_{j}\cdot (T_{i} \cdot \vvect{m'})$\label{algo:eq-quadratic}\;
    }
  }}
  \Return{$T_{1},\dots,T_{n}$}
\end{algorithm}
For $P \in K[\mathbf{X}]$ with support contained in $B$, the notation $\vvect{P}$ represents the vector of coefficients of $P$ in the basis $B$.
With that notation, given a monomial $m \in B$ and a matrix $M$ with rows and columns indexed by $B$, $M \cdot \vvect{m}$ is the column of $M$ corresponding to $m$.

For $i \in \{1,\dots,n\}, \mu, m\in B$, we denote by $c_{i,\mu,m}$ the value at row $\mu$ and column $m$ in the multiplication matrix $T_{i}$.
The following theorem states the correctness and the complexity of the algorithm.

\begin{theo}
  \label{prop:termin-r0}
  Algorithm~\ref{algo:mult_mat_from_TateGB} is correct.
  More precisely, at the end of the $k$-th run of the loop, the matrices $(T_{i})$ are correct modulo $\pi^{k}$.
  Furthermore, each run through the loop requires $O(\delta^{2}\epsilon)$ operations in $\Kz$.
\end{theo}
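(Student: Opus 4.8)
The plan is to prove the two assertions --- correctness modulo $\pi^k$ after the $k$-th run, and the per-loop complexity bound --- more or less independently, since the second is essentially a bookkeeping argument while the first requires a genuine induction.

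For correctness, I would argue by induction on $k$, with the inductive hypothesis being that before the $(k{+}1)$-st run (equivalently, after the $k$-th run) every entry $c_{i,\mu,m}$ agrees with the true value of the multiplication matrix modulo $\pi^k$; the base case $k=0$ is vacuous since the matrices start as zero and the true matrices have entries in $\Kz$. For the inductive step, I would examine the three branches of the inner loop in the order in which the monomials $X_i m$ are processed. In the first branch ($X_i m \in B$) the column is set exactly, with no error. In the second branch ($X_i m = \LT(g)$ for a monic $g \in G$) the column is again set exactly to $\vvect{g - \LT(g)}$, which is the genuine normal form of $X_i m$ since $G$ is reduced --- this is the place where Theorem~\ref{theo:GB} and the reducedness hypothesis enter, guaranteeing that $g - \LT(g)$ is supported on $B$ and equals $\NF(X_i m)$ exactly. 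The delicate branch is the third one, where $m = X_j m'$ with $m' \notin B$ and we set $T_i \cdot \vvect{m} \leftarrow T_j \cdot (T_i \cdot \vvect{m'})$. Here $X_i m = X_i X_j m'$; the column $T_i \cdot \vvect{m'}$ was computed earlier in the \emph{same} run (since $X_i m' <_m X_i m$ in the processing order, the weight/valuation being trivial as $\r = 0$), so by the inner-loop invariant it is correct modulo $\pi^{k+1}$ for entries already updated in this run, hence applying $T_j$ --- whose entries are in $\Kz$ and correct modulo $\pi^k$ --- we recover $T_i \cdot \vvect{m}$ correct modulo $\pi^k$. The gain of one extra $\pi$-adic digit per run comes from combining the previous-run correctness of $T_j$ (mod $\pi^k$) with the fact that, in the products of matrices with entries in $\Kz$, an error of size $\pi^k$ in one factor only propagates to an error of size $\pi^k$, and the exact branches inject fresh exact data each round; a careful tracking of which entries are "this-run-correct" versus "previous-run-correct" in the iteration order shows the matrices stabilize one digit at a time, so after $k$ runs they are correct modulo $\pi^k$. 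I expect this propagation analysis --- precisely formulating the inner-loop invariant that distinguishes columns already touched in the current run from those not yet touched, and checking that the reference to $\vvect{m'}$ always falls in the "already touched, hence one-digit-better" case --- to be the main obstacle.

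For the complexity of a single run of the outer loop, I would note that the inner loop iterates over pairs $(i,m)$ with $m \in B$ and $X_i m \in \bar B \setminus B$, i.e.\ over the boundary, of which there are at most $\epsilon$; indices with $X_i m \in B$ cost $O(\delta)$ each (a vector copy) and contribute $O(n\delta \cdot \delta) = O(\epsilon \delta^2)$ at worst, while each boundary index falling in the second branch costs $O(\delta)$ (reading off $g$) and each falling in the third branch costs one matrix-vector product $T_j \cdot (T_i \cdot \vvect{m'})$, i.e.\ $O(\delta^2)$ operations in $\Kz$. Since there are at most $\epsilon$ boundary indices, the third branch contributes $O(\epsilon \delta^2)$, and summing the three contributions gives $O(\epsilon \delta^2)$ operations in $\Kz$ per run, as claimed. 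Combined with the correctness statement over $\prec+1$ runs, this also yields the global bound $O(\epsilon \delta^2 \prec)$ recorded in the preceding theorem; I would make the remark that working modulo $\pi^{\prec}$ throughout means each $\Kz$-operation is a base operation up to the factors in \eqref{eq:4}--\eqref{eq:5}, but that the faster $\softO$ bound for Laurent series and $\Qp$ relies on the relaxed-arithmetic refinement discussed separately and is not part of this particular statement.
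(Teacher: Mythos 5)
Your complexity analysis is fine and matches the paper, and your treatment of the first two branches (exactness, use of reducedness) is correct. But the correctness induction has a genuine gap, and it is located exactly where you wave at ``careful tracking''. In the third branch the algorithm computes $c_{i,\mu,m} = \sum_{\mu'\in B} c_{j,\mu,\mu'}\, c_{i,\mu',m'}$. You correctly observe that the column $T_i\cdot\vvect{m'}$ has already been updated in the current run (since $X_i m' < X_i m$), so it carries the new precision; the real difficulty is not that reference but the entries $c_{j,\mu,\mu'}$ of $T_j$ for those $\mu'\in B$ with $X_j\mu' > X_i m$: these columns of $T_j$ have \emph{not} yet been touched in the current run, so they are only correct to the previous run's precision. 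Your propagation principle --- ``an error of size $\pi^k$ in one factor with the other factor in $\Kz$ gives an error of size $\pi^k$'' --- then only yields that no precision is lost, not that a digit is gained; as written, your inductive step concludes ``correct modulo $\pi^k$'' in a run whose target is $\pi^{k+1}$, so the induction does not close and the matrices would never be shown to converge beyond their initial precision.

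The missing ingredient is a structural valuation lemma (the paper isolates it as Lemma~\ref{prop:structure-r0}): if $X_i m \notin B$, then any monomial $\mu \geq X_i m$ occurring in $\NF(X_i m)$ carries a coefficient of positive valuation, because $c\,\mu < X_i m$ in the Tate term order with trivial log-radii forces either $\mu < X_i m$ or $\val(c)>0$. Applied with $\mu' > X_i m'$ (equivalently $X_j\mu' > X_i m$, and one checks $X_j\mu'= X_i m$ is impossible since $X_i m'\notin B$), it gives $\pi \mid c_{i,\mu',m'}$, so the stale factor $c_{j,\mu,\mu'}$, known only to the old precision, is multiplied by something divisible by $\pi$ and the product still reaches the new precision; for $X_j\mu' < X_i m$ both factors are already at the new precision. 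This is precisely what produces the one-digit-per-run gain, and without stating and using it (or an equivalent valuation bound on the off-staircase normal-form coefficients) the argument fails.
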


The proof uses the following observation, which is the translation to the Tate setting of the structure of the normal forms of the staircase in the classical setting.

\begin{lem}
  \label{prop:structure-r0}
  If $X_{i}m \in B$, then $\val(c_{i,\mu,m}) > 0$ if $\mu \neq X_{i}m$.
  Otherwise, if $X_{i}m \leq \mu$, then $\val(c_{i,\mu,m}) > 0$.
\end{lem}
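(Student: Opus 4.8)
The plan is to analyze the monomials appearing in the normal form $\NF(X_i m)$ by using two facts: first, $\NF(X_i m) - X_i m \in I$, so its Gauss valuation is controlled by the valuation of $X_i m$ (which is $0$ since $\r = (0,\dots,0)$, all monomials having Gauss valuation $0$); second, the reduced Gröbner basis reduces modulo $\pi$ to a classical reduced Gröbner basis (Theorem~\ref{theo:GB}), so the leading-term structure of normal forms is the classical one modulo $\pi$. The entry $c_{i,\mu,m}$ is by definition the coefficient of $\mu$ in $\NF(X_i m)$, so I need to show this coefficient has positive valuation except in the one ``diagonal-like'' position predicted by the classical theory.

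First I would treat the case $X_i m \in B$. Here $X_i m$ is itself a standard monomial, so classically $\NF(X_i m) = X_i m$ exactly; in the Tate setting this means $\NF(X_i m) \equiv X_i m \pmod{\pi}$, i.e. the coefficient of $X_i m$ in $\NF(X_i m)$ is a unit (in fact $\equiv 1$), while the coefficient of any other $\mu \in B$ is $\equiv 0 \pmod \pi$, that is $\val(c_{i,\mu,m}) > 0$ for $\mu \neq X_i m$. This is exactly the first assertion. For the second case, $X_i m \notin B$ (either $X_i m = \LT(g)$ for some $g \in G$, or $X_i m = X_j \mu'$ with $\mu' \notin B$): reducing modulo $\pi$, the classical structure of normal forms over a reduced Gröbner basis guarantees that every monomial $\mu$ occurring in $\NF(X_i m)$ satisfies $\mu <_m X_i m$, hence $\mu < X_i m$ for the Tate term order when both have Gauss valuation $0$. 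Contrapositively, if $X_i m \leq \mu$ then the coefficient of $\mu$ in $\NF(X_i m)$ vanishes modulo $\pi$, giving $\val(c_{i,\mu,m}) > 0$.

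The one subtlety I would be careful about — and the main thing to get right — is the bridge between ``Tate normal form'' and ``classical normal form modulo $\pi$''. Since $G$ is monic and reduced with Gauss valuation $0$, and $\r = 0$, the $\Kz$-module $M = \sum_{\mu \in B} \Kz \mu$ maps isomorphically onto $\KX^\circ/(I \cap \KX^\circ)$ after completing, and the reduction of this module mod $\pi$ is $\bar K[\X]/\bar I$ with $\bar I$ the classical ideal generated by $\bar G$. Under this identification the Tate normal form of a Gauss-valuation-$0$ element reduces to the classical normal form of its reduction; I would invoke Theorem~\ref{theo:GB} and the fact that the staircase $B$ is the same on both sides to make this precise. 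Once this reduction-compatibility is established, both claims are immediate from the corresponding classical statements about leading monomials of normal forms with respect to a reduced Gröbner basis; no genuine computation is needed beyond citing those classical facts.
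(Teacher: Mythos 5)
Your proof is correct, but it takes a genuinely different route from the paper's. The paper's argument is a two-line direct one: the column of $T_i$ indexed by $m$ is the coordinate vector of the normal form $N$ of $X_i m$; if $X_i m \in B$ then $N = X_i m$ exactly (not just modulo $\pi$), and otherwise every term $c_{i,\mu,m}\mu$ of $N$ satisfies $c_{i,\mu,m}\mu < X_i m$ for the Tate term order, so by the very definition of that order (compare Gauss valuations first, then monomials) either $\mu <_m X_i m$ or $\val(c_{i,\mu,m}) > 0$, which is exactly the dichotomy claimed. You instead pass through Theorem~\ref{theo:GB} and the classical normal form of $\overline{X_i m}$ with respect to $\bar G$ in $\bar K[\X]$. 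That works, but it is heavier: to even reduce $\NF(X_i m)$ modulo $\pi$ you must first know its coefficients lie in $\Kz$ (i.e.\ the normal form has nonnegative Gauss valuation) and that the staircases agree and the normal form is the unique representative supported on $B$ — and the valuation bound is itself most easily obtained from the same term-order observation the paper uses, so your ``bridge'' is, in a sense, a consequence of the shorter argument rather than a shortcut around it. What your approach buys is the explicit link to the classical picture modulo $\pi$, which is indeed the engine behind the algorithm's correctness elsewhere in Section~3; what the paper's approach buys is brevity and independence from any reduction-compatibility facts. One minor point: in the case $X_i m \in B$ you only claim $\NF(X_i m) \equiv X_i m \pmod{\pi}$, which suffices, but the equality $\NF(X_i m) = X_i m$ holds on the nose since $X_i m$ is itself supported on the staircase.
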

\begin{proof}
  By definition, the column indexed by $m$ in the multiplication matrix $M_{i}$ is the vector of the coordinates of the normal form $N$ of $X_{i}m$ modulo $G$, in the basis $B$.
  If $X_{i}m \in B$, then $N = X_{i}m$ and the result is clear.
  Otherwise, if $c_{i,\mu,m} \mu$ is a term of $N$, then $c_{i,\mu,m}\mu < X_{i}m$, which, by definition of the Tate term ordering, means that either $\mu < X_{i}m$ or $\val(c_{i,\mu,m}) > 0$. 
\end{proof}

\begin{proof}[Proof of the theorem]
  We prove the result by induction on $k \geq 0$, and, for each value of $k$, by induction on $X_{i}m$, $i \in \{1,\dots,n\}$, $m \in B$. 
  
  The initial case $k=0$ is empty.
  Let $k>0$, $i \in \{1,\dots,n\}$ and $m \in B$, and assume by induction that we know the coefficient $c_{j,\mu',m'}$ with precision $k$ if $X_{j}m' < X_{i}m$, and with precision $k-1$ otherwise.
  
  If $X_{i}m \in B$, then there is nothing to prove, because the coefficients are 0 or 1.
  If $X_{i}m = \LT(g)$ for some $g \in G$, there is also nothing to prove, since all the coefficients of $M_{i}\cdot \vvect{m}$ are known to precision $\prec \geq k$.
  
  In the remaining case, for each $\mu \in B$, the algorithm performs the substitution
  \begin{equation}
    \label{eq:7}
    c_{i,\mu,m} \leftarrow \sum_{\mu' \in B} c_{j,\mu,\mu'} c_{i,\mu',m'}.
  \end{equation}
  Since $X_{j}m' = m$, $m' < m$ and $X_{i}m' < X_{i}m$ and so the induction hypothesis applies. 
  Let $\mu' \in B$
  .
  Note that $X_{j}\mu' \neq X_{i}m$: otherwise, $X_{j}\mu' = X_{i}X_{j}m'$ so $\mu' = X_{i}m'$, which cannot lie in $B$.
  If $X_{j}\mu' < X_{i}m $, then both $c_{j,\mu,\mu'}$ and $c_{i,\mu',m'}$ are known up to precision $k$ (induction on $X_{i}m$), so the product is known to precision $k$.
  And if $X_{j}\mu' > X_{i}m$, then $c_{j,\mu,\mu'}$ is known up to precision $k-1$ (induction on $k$) and $c_{i,\mu',m'}$ is known up to precision $k$ (induction on $X_{i}m$) and divisible by $\pi$ (by Lemma~\ref{prop:structure-r0}), so the product is known up to precision $k$.

  For the number of operations, observe that there are less than $\epsilon$ pairs $(i,m)$ such that the alsorithm needs to perform the computation at line~\ref{algo:eq-quadratic}; each computation involves $\delta$ coefficients of the matrix, and for each of them, $\delta$ products in $\Kz$.
\end{proof}

\subsection{Nonreduced Gröbner bases}
\label{sec:nonreduced-grobner}

An interesting feature of the algorithm above is that contrary to the usual case, it has to handle monomials which are larger than the current monomial $X_{i}m$.
This removes the main reason for the requirement that the input Gröbner basis is reduced, and with slight modifications, it can handle any Gröbner basis as long as it is reduced modulo $\pi$.
Precisely, this is achieved by replacing line~\ref{algo:eq-gb} with the following. 

\bgroup
\removelatexerror
\begin{algorithm}[H]
  \renewcommand{\theAlgoLine}{\thealgocf.8\alph{AlgoLine}}
  \addtocounter{algocf}{-1}
  \renewcommand{\thealgocf}{\arabic{algocf}a}
  \caption{Update the matrices using a nonreduced basis element}
  \label{algo:nonreduced}
  \renewcommand{\thealgocf}{\arabic{algocf}}
  \For{$a\mu$ in the support of $g - \LT(g)$}{
    \uIf{$\mu < X_{i}m$}{
      $T_{i}\cdot \vvect{m}\leftarrow T_{i}\cdot\vvect{m} + a [\mu]$
    }
    \Else{
      Pick $m' \in B$ such that
      $\mu = X_{1}^{\alpha_{1}}\cdots X_{n}^{\alpha_{n}}m'$\;\label{algo:line:decom-nonred}
      $T_{i}\cdot \vvect{m} \leftarrow T_{i}\cdot \vvect{m} + aT_{1}^{\alpha_{1}}\cdots T_{n}^{\alpha_{n}} \cdot \vvect{m'}$\;
    }
  }
\end{algorithm}
\egroup

Unless the staircase is trivial, \emph{i.e.} as long as the ideal is proper, it is always possible to find a suitable $m'$ at line~\ref{algo:line:decom-nonred}, by picking the monomial $1 \in B$.
Nonetheless, to avoid computing large powers of matrices, it is more efficient to find $m'$ as large as possible.

It is still true that any monomial $\mu > X_{i}m$ appearing in the process necessarily carries a coefficient with valuation $\geq 1$, and thus the loop invariant that the coefficients are known to precision $k$ still holds.

The complexity of the computation is no longer bounded merely in terms of $\delta$, $\epsilon$ and $\prec$, but also depends on the degree of the nonreduced terms in the basis, and on the choices of the monomials $m'$.

\subsection{Recursive algorithm}
\label{sec:recursive-algorithm}

As described above, the computations can be done in increasing order of the monomials $X_{i}m$, ensuring that all the necessary coefficients are known with the necessary precision for the next step.
Another way to proceed is by dynamic programming, computing the necessary coefficients recursively if they are not known yet.

The recursive definition, using the matrices $T_{i}$ as a cache, is described in Algorithm~\ref{algo:mult_mat_rec}, and is very similar to that described in Algorithm~\ref{algo:mult_mat_from_TateGB}.

It can then be called, for all values of $i,\mu,m$ and $k=\prec$, instead of lines \ref{algo:line:beg-T}--\ref{algo:eq-quadratic} in Algorithm~\ref{algo:mult_mat_from_TateGB}.
The main difference is that the algorithm does not need to specify in which order the coefficients are computed: the recursive definition queries the missing coefficients as needed.
The decision on which precision is needed depends on the valuation of the coefficient: the idea is that if $a$ is known with precision $k$ and has valuation $v$, and $b$ is known with precision $l$ and has valuation $w$, then $a\cdot b$ is known with precision $\min(k+w,l+v)$.
Note that it also works if we only know a lower bound on the valuation, typically if all the digits we know are $0$.

The proof that the recursive algorithm terminates is the existence of such an order, as demonstrated in Theorem~\ref{prop:termin-r0}.
And the proof of complexity is also immediate: there are $\epsilon$ coefficients for which the calculation is non-trivial, and for each of them, after $\delta$ multiplications, we gain one digit of precision.
The total complexity is then $O(\epsilon \delta^2 \text{prec})$ operations in $\Kz$ as in the iterative case.

The advantage of the recursive presentation is twofold.
Firstly, it will allow in Section~\ref{sec:other-log-radii} to generalize the construction, the proof of termination, and the complexity bounds, to arbitrary log-radii.

Secondly, it offers a way to immediately improve the performance of the algorithms, on coefficient rings such as $\ZZ_{p}$ or $k(\!(T)\!)$ where fast arithmetic is available.
This works by using a lazy representation of the number, that is, a representation where each number is the data of its first digits, as well as a function allowing to compute the next digit.
Algorithm~\ref{algo:mult_mat_rec} gives us precisely such a function, and as such, the process can be viewed as a recursive definition of lazy numbers (the function definition) together with a delayed evaluation (the function call for all values).

For many coefficient rings, it is possible to do better by using the so-called relaxed, or on-line, arithmetic.
Such arithmetics are available for formal power series rings~\cite{vdH:1997} and $p$-adic numbers~\cite{BvdHL:2011,BL:2012}.
In that case, the cost of the computation of each new digit (of each variable) is polynomial in $\log(\text{prec})$ if we are counting base operations (in the sense of Section~\ref{sec:setting}).
Here, this allows us to compute the matrices with base complexity in $\softO(\epsilon \delta^{2} \prec)$.

\begin{algorithm}[t]
  \caption{\texttt{MulMat\_rec}$(G, B, i, \mu, m, k)$} 
  \label{algo:mult_mat_rec}

  \SetKwInOut{Global}{Global}
  
  \Input{$G$ as in Algo.~\ref{algo:mult_mat_from_TateGB}, $B$ the staircase of $G$,\\
         $i \in \llbracket 1,n \rrbracket,\,\mu \in B, \, m \in B, \, k \in \ZZ, k \leq \prec$}
  \Global{$(T_{i}) = (c_{i,\mu,m})_{\mu,m \in B})_{i \in \llbracket 1,n \rrbracket}$}
  \Output{$T_{i}$ is such that $c_{i,\mu,m}$ is known to precision $k$ 
  }
  \uIf{$c_{i,\mu,m}$ is known to precision $k$ in $T_{i}$}{
    \textbf{do nothing}
  }
  \uElseIf{$k \leq 0$}{
    $c_{i,\mu,m}\leftarrow O(1)$
  }
  \uElseIf{$X_{i}m \in B$}{
    $c_{i,\mu,m}\leftarrow$ 1 \textbf{if} $\mu = X_{i}m$ \textbf{else} 0}
  \uElseIf{$X_{i}m = \LT(g)$ for $g \in G$}{
    $c_{i,\mu,m}\leftarrow$ the coordinate of $\mu$ in the support of $g - \LT(g)$
  }
  \Else{
    Write $m = X_{j}m'$ for some $m' \notin B$\;
    $c \leftarrow 0$\;
    \For{$\mu' \in B$}{
      $v \leftarrow \val(c_{j,\mu,\mu'})$;
      $w \leftarrow \val(c_{i,\mu',m'})$\;
      $\texttt{MulMat\_rec}(G, B, j,\mu,\mu',k{-}w)$\;
      $\texttt{MulMat\_rec}(G, B, i,\mu',m',k{-}v)$\;
      $c \leftarrow c + c_{j,\mu,\mu'}c_{i,\mu',m'}$\; \label{line:add}
      }
    $c_{i,\mu,m} \leftarrow c + O(\pi^{k+1})$\; 
 }
\end{algorithm}

\begin{rem}
  For simplicity and for the complexity bounds, we only presented the procedure in the case where the Gröbner basis is reduced, but given that the recursive definition is equivalent to the loop presented in Algorithm~\ref{algo:mult_mat_from_TateGB}, the case where the Gröbner basis is not reduced can be dealt with in exactly the same way.
\end{rem}

\subsection{General log-radii}
\label{sec:other-log-radii}

We now consider the case of arbitrary log-radii $\mathbf{r} \in \QQ^{n}$.
We will prove that the algorithm presented above still works in that case, by using abstract changes of variables and base ring to justify the existence of a suitable execution order.
Crucially, the algorithm works without performing those transformations, and the complexity is the same.
We only need to be more careful about the handling of the precision and of the valuation.

Namely, given $\mathbf{r} \in \QQ^{n}$, we will assume that the input basis $G$ is normalized, in the sense that $0 \leq \val(\LT(g)) < 1$ for all $g \in G$.
We will further require that for each $g \in G$, and for each $t$ in the support of $G$, $t$ is known to precision $\prec + \floor{\val(t)}$, and we will ensure that we compute the matrices with similar precision by ensuring that $c_{i,\mu,m}$ is correct up to precision $k+\floor{v_{\mu,m}}
$.

Recall that $\NF(X_{i}m) = \sum_{\mu \in B} c_{i,\mu,m} \mu$ with, for all $\mu$, $c_{i,\mu,m} \mu < X_{i}m$.
So by definition of the Tate term ordering, $\val(c_{i,\mu,m}) \geq \val(X_{i}m) - \val(\mu) = v_{\mu,m}$, and the requirement on the precision is merely adjusting the number of digits we require beyond those we already know to be 0.
The only difference is that each term is initialized with the zero digits and the precision which we already know:

\vspace{0.5cm}
\bgroup
\removelatexerror
\begin{algorithm}[H]
    \addtocounter{algocf}{-1}
  \renewcommand{\thealgocf}{\arabic{algocf}a}
    \caption{Base case with non-zero log-radii}
    \label{algo:rec-log-radii}
  \renewcommand{\thealgocf}{\arabic{algocf}}
  \setcounter{AlgoLine}{2}
  \renewcommand{\theAlgoLine}{\thealgocf.\arabic{AlgoLine}}
  \uElseIf{$k \leq  \lfloor\val(X_{i}m) - \val(\mu)\rfloor$}{
    $c_{i,\mu,m}\leftarrow O(\pi^{\lfloor\val(X_{i}m) - \val(\mu)\rfloor})$\;
  }
\end{algorithm}
\egroup
\vspace{0.5cm}


\begin{theo}
  Let $\mathbf{r} \in \QQ^{n}$ be a system of log-radii.
  Algorithm~\ref{algo:mult_mat_from_TateGB}, with input a reduced Gröbner basis of an ideal in $\KzX[\mathbf{r}]$, and modified to compute the matrices using Algorithm~\ref{algo:mult_mat_rec}, computes the multiplication matrices in $O(\epsilon \delta^{2}\prec)$ multiplications in $\Kz$.
\end{theo}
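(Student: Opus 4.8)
The plan is to reduce the case of arbitrary log-radii to the already-treated case $\mathbf{r} = (0,\dots,0)$ by an abstract change of variables and base ring, just as announced in the preceding paragraph. First I would introduce an extension $L$ of $K$ (a totally ramified extension, or a suitable ramified extension of the value group) containing elements $\pi_1, \dots, \pi_n$ with $\val(\pi_i) = r_i$ — or, when the $r_i$ are not all representable, pass to the $\QQ$-divisible completion of the value group and argue formally, since the algorithm never actually performs the extension. The substitution $X_i \mapsto \pi_i^{-1} X_i$ then carries $\KX[\r]$ isomorphically onto $L\{\mathbf X\}$ (with zero log-radii over $L$), sends the staircase $B$ to itself as a set of monomials, and transforms the Gröbner basis $G$ into a Gröbner basis $G'$ of the image ideal $I'$; crucially the Tate term ordering is preserved because multiplying $X_i^{\mathbf i}$ by $\pi^{-\r\cdot\mathbf i}$ shifts the Gauss valuation of a term by exactly $\r\cdot\mathbf i$, which is precisely the difference between the two Gauss valuations. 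Under this dictionary the multiplication matrix $T_i$ over $\KX[\r]/I$ and the multiplication matrix $T_i'$ over $L\{\mathbf X\}/I'$ are conjugate by the diagonal matrix $\Diag(\pi^{\val(m)})_{m\in B}$, so the $(\mu,m)$-entries differ exactly by the factor $\pi^{\val(\mu)-\val(m)} = \pi^{-v_{\mu,m} - r_i}$; this is what makes the "shifted precision" bookkeeping $c_{i,\mu,m}$ correct up to $\pi^{k + \floor{v_{\mu,m}}}$ the natural normalization.

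Second, I would argue that the recursive Algorithm~\ref{algo:mult_mat_rec}, with the base case replaced by Algorithm~\ref{algo:rec-log-radii}, computes exactly the entries of $T_i$ one would get by running the zero-log-radii algorithm on $G'$ over $L$ and conjugating back — but without ever leaving $\KX[\r]$. The three branches of the recursion match up termwise: the "$X_i m \in B$" and "$X_i m = \LT(g)$" branches are immediate from the definitions (using the normalization $0 \le \val(\LT(g)) < 1$ and the precision hypothesis on the support of $G$), and the recursive branch $m = X_j m'$ produces the same bilinear identity~\eqref{eq:7} in both settings because conjugation by a diagonal matrix is compatible with composition. The existence of a terminating evaluation order is then inherited from Theorem~\ref{prop:termin-r0} applied to $G'$: the order "increasing in $X_i m$" for the ordering $<$ over $L$ pulls back to a well-order on the set of triples $(i,\mu,m)$ that the dynamic-programming recursion respects, so every recursive call is on a strictly smaller triple or on one already cached, hence the recursion bottoms out.

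Third, for the precision and valuation bookkeeping I would check the invariant stated in §\ref{sec:other-log-radii}: whenever $\mu > X_i m$ in the Tate ordering, Lemma~\ref{prop:structure-r0} (in its log-radii incarnation, i.e. the inequality $\val(c_{i,\mu,m}) \ge v_{\mu,m}$ derived from $c_{i,\mu,m}\mu < X_i m$) guarantees an extra $\floor{\cdot}+1$ of valuation, so in line~\ref{line:add} the product $c_{j,\mu,\mu'} c_{i,\mu',m'}$ is known to the precision the caller requested; the recursive calls request precisely $k - w$ and $k - v$ digits, where $v,w$ are the (lower bounds on the) valuations, matching the rule "$a$ to precision $k$ and valuation $v$, $b$ to precision $l$ and valuation $w$ give $ab$ to precision $\min(k+w, l+v)$". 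This is essentially the same verification as in the proof of Theorem~\ref{prop:termin-r0}, carried through with $\floor{v_{\mu,m}}$ added everywhere.

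Finally, the complexity is unchanged: there are at most $\epsilon$ triples $(i,m)$ for which the recursive branch is invoked (the others being entries of $\bar B \setminus B$), each such call loops over the $\delta$ monomials $\mu' \in B$, each step costs $O(1)$ multiplications in $\Kz$ and advances the precision of the relevant entry by one digit, and we need $\prec$ digits — giving $O(\epsilon \delta^2 \prec)$ multiplications in $\Kz$, exactly as in the iterative case. The main obstacle, and the part deserving the most care, is making the "abstract change of variables" rigorous when the $r_i$ are arbitrary rationals (or worse): one must either genuinely build a ramified extension $L/K$ with the right value group and check that Gröbner bases, staircases, and reducedness all descend, or else phrase the whole argument purely in terms of the shifted Gauss valuation on $\KX[\r]$ without invoking $L$ at all — the paper seems to favour the former as a justification while insisting the computation stays over $K$, so the write-up must be explicit that the extension is only a proof device and that every quantity manipulated by the algorithm lies in $\Kz$.
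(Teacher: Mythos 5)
Your proposal is correct and follows essentially the same route as the paper: extend scalars to $L=K[\eta]$ with $\eta^D=\pi$ (possible since $\r\in\QQ^n$), change variables to reduce to the trivial log-radii case of Theorem~\ref{prop:termin-r0}, transfer termination back while noting the algorithm itself never leaves $K$, and keep the same $O(\epsilon\delta^2\prec)$ count. The only cosmetic difference is that the paper phrases the transfer of termination via an embedding of the dependency graph $\Gamma(G)$ into the cycle-free graph $\Gamma(G')$ (and invokes minimality together with \cite[Prop.~3.10]{CVV} to keep coefficients in $\Kz$), whereas you pull back a terminating evaluation order and argue via the valuation bound $\val(c_{i,\mu,m})\geq v_{\mu,m}$; these are the same argument in different clothing.
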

\begin{proof}
  Let $\Gamma(G)$ be the dependency graph of the recurrence relation defined in Algorithm~\ref{algo:mult_mat_rec} with the modifications of Algorithm~\ref{algo:rec-log-radii}: namely, $\Gamma(G)$ is a directed graph whose vertices are tuples $(i,\mu,m,k)$, and there is a directed edge $(i,\mu,m,k) \rightarrow (j,\mu',m',l)$ if and only if the computation of $c_{i,\mu,m}$ to precision $k$ queries the coefficient $c_{j,\mu',m'}$ to precision $l$.
  Note that the vertices with no outgoing edge correspond to coefficients which are immediately known to precision $\prec$.
  The recursive computation terminates if and only if the graph is cycle-free, namely, if every path through the graph eventually reaches a vertex with no outgoing edge.

  In the case of trivial log-radii $\mathbf{r} = (0,\dots,0)$, the proof of that fact is Theorem~\ref{prop:termin-r0}.
  Assume that $\mathbf{r} \in \QQ^{n}$.
  %
  %
  Let $D$ be the common denominator of the log-radii, so that $\mathbf{r} = (r_{1}/D, \dots, r_{n}/D)$.
  Without loss of generality, we may assume that $G$ is minimal: elements of $G$ which can be removed will not take part in the computation.
  Consider the field extension $L = K[\eta]$ with $\eta^{D} = \pi$, and perform the change of variables $X_{i} \leftarrow \eta^{r_{i}}Y_{i}$.
  This change of variables transforms $G$ into a Gröbner basis $G'$ of an ideal in $\Lz\{\Y\}$.
  In this case, the algorithm terminates, so the graph $G'$ is cycle-free.
  If $G$ is minimal, so is $G'$, and by \cite[Prop.~3.10]{CVV}, the elements of this basis lie in $\Kz\{\Y\} \subset \Lz\{\Y\}$.
  In particular, all throughout the algorithm, the coefficients of the matrices are in $\Kz$.

  The graph $\Gamma(G)$ is isomorphic to a subgraph of $\Gamma(G')$, the inclusion being given by $(i,\mu,m,k + \floor{v_{m,\mu}} \rightarrow (i,\mu,m,k)$.
  Since $\Gamma(G')$ is cycle-free, so is $\Gamma(G)$ and the algorithm terminates.
  
  The bound on the number of operations can be obtained with a similar 
  argument as before, or read on the graph: the complexity is bounded
  by $2 \delta$ times the number of vertices of $\Gamma(G)$ since 
  computing each new vertex has a cost of $2 \delta$ operations in
  $K$ (the additions and multiplications on line~\ref{line:add}).
  Since $\Gamma(G)$ has at most $\epsilon\delta{\cdot}\prec$ vertices, the bound
  $O(\epsilon \delta^2 \prec)$ follows.
\end{proof}

\section{Change of log-radii and ordering}
\label{sec:FGLM_for_Tate_alg}

The next step in the FGLM algorithm consists in going in the
opposite direction: starting from multiplication matrices and a
term ordering, we aim at reconstructing the underlying Gröbner
basis.

Moreover, in our setting where we want to be able to handle in 
addition changes of log-radii, a preliminary step is needed. 
Indeed, the multiplication matrices are usually affected by a 
modification of the log-radii. For example,
the ideal generated by $2x^2- y^2$ and $y^3-x^2$
in $\Q_2[x,y]$ has staircase $\{1, y, x, y^2, xy, xy^2\}$ (for lex)
while it spans an ideal over $\Q_2 \{x,y \}$
with staircase $\{1,y\}$ (still using lex). 
We study this phenomenon in full generality in
Section~\ref{subsec:from_poly_to_tate_alg}.

A toy implementation of the algorithms of this Section
is available on \url{https://gist.github.com/TristanVaccon}.

\subsection{New multiplication matrices}
\label{subsec:from_poly_to_tate_alg}

\subsubsection*{Theoretical results}

Let $\r$ and $\u$ be two $n$-tuples
such that $r_i \geq u_i$ for all~$i$.\footnote{The results of 
this section
can be extended without difficulty to $r_i=+\infty,$ \textit{i.e.} to
$K[\X]$.} Under this assumption
the Tate algebra $\KX[\r]$ is included in $\KX[\u]$ and, given
an ideal $I$ in $\KX[\r]$, it makes sense to consider the ideal
$J = I \cdot \KX[\u]$ of $\KX[\u]$.

In what follows, we always assume that $I$ is $0$-dimensional.
The quotient $\KX[\r]$ is then, by definition, a finite dimensional
$K$-vector space; we will denote it by $V$. Similarly, we set
$W = \KX[\u]/J$. The inclusion $\KX[\r] \hookrightarrow \KX[\u]$
induces a $K$-linear mapping $\Phi : V \to W$.

In order to study $\Phi$, we use topological arguments.
We let $\Vert\cdot\Vert_\u$ be the norm on $\KX[\u]$ associated to the 
Gauss valuation $\val_\u$ and equip $\KX[\u]$ with the topology
associated to this norm.

\begin{lem}
\label{lem:closure}
The ideal $J$ is the closure of $I$ in $\KX[\u]$.
\end{lem}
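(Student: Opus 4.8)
The plan is to establish the two inclusions $\bar I \subseteq J$ and $J \subseteq \bar I$, where $\bar I$ denotes the closure of $I$ in $\KX[\u]$ equipped with the $\Vert\cdot\Vert_\u$-topology.

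First I would handle $\bar I \subseteq J$. Here the key input is the standard fact that every (finitely generated) ideal of a Tate algebra is closed for the Gauss norm topology; see e.g.\ \cite{CVV} or \cite{FP04}. The ideal $J = I{\cdot}\KX[\u]$ is finitely generated — it is spanned over $\KX[\u]$ by any Gröbner basis of $I$, or simply because $\KX[\u]$ is Noetherian — hence closed. Since $I \subseteq J$, taking closures gives $\bar I \subseteq \bar J = J$.

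For the reverse inclusion $J \subseteq \bar I$, I would first observe that $K[\X]$ is dense in $\KX[\u]$: truncating a series $\sum_\i a_\i\X^\i \in \KX[\u]$ to total degree $\leq N$ produces a polynomial whose difference with the series has Gauss valuation at least $\min_{|\i|>N}(\val(a_\i) - \u{\cdot}\i)$, which tends to $+\infty$ with $N$. Since $K[\X] \subseteq \KX[\r] \subseteq \KX[\u]$, the subring $\KX[\r]$ is in particular dense in $\KX[\u]$. Now let $h = \sum_k f_k g_k \in J$ with $f_k \in I$ and $g_k \in \KX[\u]$; choose polynomials $g_k^{(N)}$ with $\Vert g_k - g_k^{(N)}\Vert_\u \to 0$. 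Using submultiplicativity of the Gauss norm, $\Vert f_k g_k - f_k g_k^{(N)}\Vert_\u \leq \Vert f_k\Vert_\u \cdot \Vert g_k - g_k^{(N)}\Vert_\u \to 0$, while each $f_k g_k^{(N)}$ lies in $I$ because $g_k^{(N)} \in \KX[\r]$ and $I$ is an ideal of $\KX[\r]$. Hence the finite sums $\sum_k f_k g_k^{(N)} \in I$ converge to $h$ in $\KX[\u]$, so $h \in \bar I$.

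Combining both inclusions yields $J = \bar I$, as claimed. The only non-elementary ingredient is the closedness of ideals of a Tate algebra, so the main point will be to cite this correctly; the remainder is a routine continuity argument based on density of polynomials and submultiplicativity of $\val_\u$.
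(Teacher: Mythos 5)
Your proof is correct and follows essentially the same route as the paper's: density of $K[\X]$ (hence of $\KX[\r]$) in $\KX[\u]$ to get $J \subseteq \bar I$, combined with the closedness of ideals of Tate algebras (the paper cites Bosch, Chap.~2, Cor.~8, for this) to get $\bar I \subseteq J$. The only difference is that you spell out, via truncation and multiplicativity of the Gauss norm, the step ``$I$ is dense in $J$'' that the paper leaves implicit, which is a reasonable amount of added detail.
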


\begin{proof}
The polynomial ring $K[\mathbf X]$ is dense in $\KX[\u]$ for the
norm $\Vert\cdot\Vert_u$. Therefore, $\KX[\r]$ is dense as well,
implying that $I$ is dense in $J$.
The fact that $J$ is closed follows from~\cite[Chap.~2, Cor.~8]{bosch}.
\end{proof}

The norm $\Vert\cdot\Vert_\u$ induces by restriction a norm on
$\KX[\r]$ (which is, of course, different from the standard
norm $\Vert\cdot\Vert_\r$ on this space) and a mapping
$\Vert \cdot \Vert_V : V \to \R^+$ defined by:
$$\Vert x \Vert_V = \inf_{\hat x} \Vert \hat x \Vert_\u$$
where the infinum runs over all $\hat x \in \KX[\r]$ lifting~$x$.
In general, $\Vert \cdot \Vert_V$ is not a norm but only a 
semi-norm, meaning that there might exist elements $x \in V$ for
which $\Vert x \Vert_V = 0$. By definition, the \emph{kernel} of 
$\Vert \cdot \Vert_V$ is the set of such elements; we denote it
by $N$. It is easily seen that $N$ is a sub-$K$-vector space of
$V$.

\begin{prop}
\label{prop:Phi}
The map $\Phi$ is surjective and its kernel is~$N$.
\end{prop}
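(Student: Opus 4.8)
The goal is to show that $\Phi : V \to W$ is surjective with kernel exactly $N$. I would attack surjectivity and the kernel identification separately, using the density statement of Lemma~\ref{lem:closure} as the main lever for surjectivity and the definition of the semi-norm $\Vert \cdot \Vert_V$ for the kernel.

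\textbf{Surjectivity.}
First I would note that the image of $\Phi$ is the image of the composite $\KX[\r] \hookrightarrow \KX[\u] \twoheadrightarrow W$, which is the subspace $(\KX[\r] + J)/J$ of $W$. Since $I \subset \KX[\r]$ and $J$ is the closure of $I$ in $\KX[\u]$ (Lemma~\ref{lem:closure}), and $\KX[\r]$ is dense in $\KX[\u]$ by the same lemma's proof, the image of $\Phi$ is a dense $K$-subspace of $W$. But $W = \KX[\u]/J$ is a \emph{finite-dimensional} $K$-vector space (because $I$, hence $J$, is $0$-dimensional — here I would invoke that $J = I \cdot \KX[\u]$ stays $0$-dimensional, or argue directly that the staircase of $J$ is contained in that of $I$ so $\dim_K W \le \dim_K V < \infty$). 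A finite-dimensional normed $K$-vector space has no proper dense subspace (every finite-dimensional subspace over a complete field is closed), so the image of $\Phi$ is all of $W$. Hence $\Phi$ is surjective.

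\textbf{Kernel.}
For the kernel, I would show $\ker \Phi = N$ by double inclusion. If $x \in V$ with $\Phi(x) = 0$, pick any lift $\hat x \in \KX[\r]$; then $\hat x \in J$, and since $\hat x$ lies in the closure of $I$ in $\Vert\cdot\Vert_\u$, for every $\varepsilon > 0$ there is $a \in I$ with $\Vert \hat x - a \Vert_\u < \varepsilon$. Now $\hat x - a$ is still a lift of $x$ (as $a \in I \subset \KX[\r]$ maps to $0$ in $V$), so $\Vert x \Vert_V \le \Vert \hat x - a\Vert_\u < \varepsilon$; letting $\varepsilon \to 0$ gives $\Vert x \Vert_V = 0$, i.e. $x \in N$. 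Conversely, if $x \in N$, there is a sequence of lifts $\hat x_k \in \KX[\r]$ with $\Vert \hat x_k \Vert_\u \to 0$, so $\hat x_k \to 0$ in $\KX[\u]$; but each $\hat x_k - \hat x_0 \in I$ (two lifts of $x$ differ by an element of $I$), and $I \subset J$ with $J$ closed, so $\hat x_0 = \lim(\hat x_0 - \hat x_k) \in J$, whence $\Phi(x)$, the class of $\hat x_0$ in $W$, is $0$. Thus $\ker \Phi = N$.

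\textbf{Expected main obstacle.}
The routine parts are the density manipulations; the one point that deserves care is the claim that $W$ is finite-dimensional, which is what makes ``dense subspace implies everything'' work — this is where $0$-dimensionality of $I$ (and the fact that it is inherited by $J$) is genuinely used, and it is worth stating explicitly rather than leaving implicit. Everything else is a direct unwinding of the definition of $\Vert\cdot\Vert_V$ together with Lemma~\ref{lem:closure}.
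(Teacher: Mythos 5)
Your kernel computation is exactly right, and your overall strategy is sound, but it is a genuinely different presentation from the paper's. The paper proves surjectivity and the kernel identification in one stroke: since $\KX[\u]$ is the completion of $\KX[\r]$ for $\Vert\cdot\Vert_\u$ and $J$ is the closure of $I$ (Lemma~\ref{lem:closure}), the quotient $W$ is the completion of $V$ for the semi-norm $\Vert\cdot\Vert_V$, hence of $V/N$; as $V/N$ is finite dimensional over the complete field $K$, it is already complete, so $W \simeq V/N$. Your proof unpacks the same two ingredients (density and closedness from Lemma~\ref{lem:closure}, completeness of finite-dimensional spaces over a complete field) into a separate density argument for surjectivity and a double inclusion for the kernel; the kernel half, using that two lifts differ by an element of $I$ and that $J$ is closed, needs no change.

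The one wrinkle is how you propose to secure surjectivity. You reduce it to ``$W$ is finite dimensional'' and flag that claim as the delicate point, but the justifications you sketch are the weak link: invoking that $J = I\cdot\KX[\u]$ ``stays $0$-dimensional'' is essentially the statement at stake (it amounts to $\dim_K W<\infty$, which here is most naturally a \emph{consequence} of surjectivity), and the containment of the staircase of $J$ in that of $I$ is not immediate, because the term order changes with the log-radii: the Gauss valuation $\val_\u$ may promote a different monomial of an element of $I$ to leading position than $\val_\r$ does --- precisely the phenomenon illustrated by the example $2x^2-y^2$, $y^3-x^2$ at the beginning of Section~\ref{sec:FGLM_for_Tate_alg} --- so comparing $\LT(I)$ and $\LT(J)$ requires an argument. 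Fortunately you do not need $\dim_K W<\infty$ a priori: apply your own parenthetical fact to the image rather than to $W$. The image of $\Phi$ is a finite-dimensional subspace of $W$, which is a genuine normed space because $J$ is closed; hence the image is complete, hence closed, and being dense (by density of $\KX[\r]$ in $\KX[\u]$) it is all of $W$. Finite-dimensionality of $W$ then drops out as a corollary. With this small rearrangement your proof is complete and fully parallel in strength to the paper's.
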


\begin{proof}
We notice that $\KX[\u]$ is the completion of $\KX[\r]$ for
the norm $\Vert \cdot \Vert_u$. Combining this observation with
Lemma~\ref{lem:closure}, we deduce that $W$
appears as the completion of $V$ with respect to the semi-norm
$\Vert \cdot \Vert_V$, which is also the completion of $V/N$.
But, since $V/N$ is finite dimensional, it is already complete.
As a conclusion, $W \simeq V/N$ and the proposition is proved.
\end{proof}

We now assume that we are given the multiplication matrices
$T_1, \ldots, T_n$ over $V$. We want to relate them to $W$, or
equivalently to~$N$. This is the content of the following
proposition.

\begin{prop}
\label{prop:N_in_the_bigTi}
With the above notations, we have:
\[ N = \sum_{i=1}^n \bigsp_{u_i}(T_i) \]
(where we recall that the notation $\bigsp_{u_i}$ was defined
in \S \ref{ssec:linalg}).
\end{prop}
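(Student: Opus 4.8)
The plan is to identify $N$ with the set of elements of $V$ whose images in $W$ vanish (Proposition~\ref{prop:Phi}) and then to characterize those elements spectrally via the multiplication operators. The starting point is the observation that the norm $\Vert\cdot\Vert_V$, and hence its kernel $N$, can be computed from the $T_i$'s: for $x \in V$ represented by a polynomial, multiplying by $X_i$ corresponds to applying $T_i$, and the seminorm $\Vert x\Vert_V$ measures how small a lift of $x$ can be made in $\KX[\u]$. The key point is that making a lift small for $\Vert\cdot\Vert_\u$ means being able to absorb, using the relations in $I$, the growth $r_i - u_i \geq 0$ in each direction; iterating multiplication by $X_i$ drives a vector into $N$ precisely when repeated application of $T_i$ contracts it, i.e. when it lies in the part of $V$ where $T_i$ has eigenvalues of valuation $< u_i$ — that is, $\bigsp_{u_i}(T_i)$.

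Concretely, I would proceed in two inclusions. For $\sum_i \bigsp_{u_i}(T_i) \subseteq N$: fix $i$ and $x \in \bigsp_{u_i}(T_i)$; then $T_i^{k}x \to 0$ in the appropriate normalized sense as $k \to \infty$ because all eigenvalues of $T_i$ on this subspace have valuation $< u_i$, so the operator $\pi^{-k\lceil u_i\rceil}T_i^k$ (or a cleaner renormalization by $u_i$ after passing to a ramified extension, as in the proof in Section~\ref{sec:other-log-radii}) still has integral, indeed contracting, behavior. Translating back: if $\hat x \in \KX[\r]$ lifts $x$, then $X_i^k \hat x$ lifts $T_i^k x$, and $\Vert X_i^k\hat x\Vert_\u = \Vert\hat x\Vert_\u + k(r_i - u_i) \cdot(\text{something})$... more precisely one uses that the normal form of $X_i^k\hat x$ in the staircase basis has $\Vert\cdot\Vert_\u$-norm governed by the entries of $T_i^k$, which tend to $+\infty$ in valuation on $\bigsp_{u_i}(T_i)$ relative to the $u_i$-weighting. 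Hence $\Vert x\Vert_V = 0$, so $x \in N$; since $N$ is a $K$-subspace, the whole sum lies in $N$.

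For the reverse inclusion $N \subseteq \sum_i \bigsp_{u_i}(T_i)$, I would argue contrapositively using the slope decomposition of each $T_i$. Write $V = \bigsp_{u_i}(T_i) \oplus \smallsp_{u_i}(T_i)$; if $x \notin \sum_i \bigsp_{u_i}(T_i)$, then projecting, there is a direction — or rather a simultaneous argument over all $i$ — in which $x$ has a nonzero component in every $\smallsp_{u_i}(T_i)$ is too strong; instead the right statement is that $W = V/N$ is itself the quotient on which every $X_i$ acts with $\Vert\cdot\Vert_\u$-integral bounded inverse-growth, i.e. $W$ carries the structure making $\KX[\u]/J$ a Tate quotient, and one shows $N$ is the \emph{largest} $K$-subspace stable under all $T_i$ on which the $u_i$-normalized operators fail to be topologically nilpotent-free — equivalently $N = \sum_i \bigsp_{u_i}(T_i)$ because the right-hand side is exactly the smallest subspace $N'$ such that on $V/N'$ each $T_i$ has all eigenvalues of valuation $\geq u_i$, and by Proposition~\ref{prop:Phi} combined with the fact that multiplication by $X_i$ on $W$ has operator norm $\leq 1$ after the $u_i$-shift (a defining property of $\KX[\u]$), $V/N$ must be such a quotient, forcing $N \supseteq \sum_i\bigsp_{u_i}(T_i)$; minimality from the first inclusion then gives equality.

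The main obstacle I expect is making the seminorm computation rigorous: relating $\Vert x\Vert_V$ — an infimum over infinitely many lifts in an infinite-dimensional algebra — to the finite linear-algebraic data of the $T_i$'s. The clean route is the one hinted at in Section~\ref{sec:other-log-radii}: pass to $L = K[\eta]$ with $\eta^D = \pi$ and rescale $X_i \mapsto \eta^{u_i D} Y_i$ so that $\KX[\u]$ becomes $L\{\Y\}$ with trivial log-radii, where $\Vert\cdot\Vert_\u$ is just the Gauss norm and the multiplication-by-$Y_i$ operators are the $\eta^{-u_i D}T_i$; then $N$ becomes the kernel of the seminorm attached to the inclusion into $L\{\Y\}$, which by the trivial-radii case (Theorem~\ref{theo:GB} and the structure of staircases mod $\pi$) is visibly $\sum_i \bigsp_0(\eta^{-u_i D}T_i) = \sum_i \bigsp_{u_i}(T_i)$. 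Checking that this rescaling is compatible with everything — in particular that the subspace one gets is defined over $K$, not just $L$ — is the one genuinely delicate verification, but it follows from \cite[Prop.~3.10]{CVV} exactly as in the earlier proof.
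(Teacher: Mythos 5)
There are two genuine gaps. First, in your inclusion $\sum_i \bigsp_{u_i}(T_i) \subseteq N$, the spectral mechanism is backwards: on $\bigsp_{u_i}(T_i)$ all eigenvalues of $T_i$ have valuation $< u_i$, so after the $u_i$-normalization the operator has eigenvalues of \emph{negative} valuation and its powers expand; the claim that $T_i^k x \to 0$ ``in the appropriate normalized sense'' is false, and the entries of $T_i^k$ on this block have valuations tending to $-\infty$ relative to the $u_i$-weighting, not $+\infty$. Moreover, even with the correct spectral behaviour, your bookkeeping only records that $X_i^k \hat x$ lifts $T_i^k x$; estimating $\Vert T_i^k x\Vert_V$ says nothing about $\Vert x \Vert_V$ unless you also invert $T_i$ on the Big block. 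The repair is exactly the step the paper takes: after normalizing to $\u=(0,\dots,0)$, the polynomial $\chi_{\bigsp,0,T_i}(X_i)$ is a unit of $\KX$ (each factor is $a^{-1}(1-aX_i)$ with $\val(a)>0$), so a lift $f$ of $x$ satisfies $\chi_{\bigsp,0,T_i}(X_i)\, f \in I \subset J$, hence $f \in J$ and $x \in \ker\Phi = N$ by Proposition~\ref{prop:Phi}; equivalently, for an eigenvector one writes $x = \lambda^{-k} X_i^k x$ in $V$ and uses $|\lambda|^{-k} \to 0$. In both versions one divides by the large eigenvalues, which is the step missing from your sketch.

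Second, and more seriously, you never prove the hard inclusion $N \subseteq \sum_i \bigsp_{u_i}(T_i)$. The paragraph you devote to it ends by deducing, from the operator-norm bound on $W$ and the fact that the right-hand side is the smallest subspace $N'$ (stable under all the $T_i$) such that on $V/N'$ every eigenvalue of $T_i$ has valuation $\geq u_i$, only that $N \supseteq \sum_i \bigsp_{u_i}(T_i)$ --- the same inclusion as before; ``minimality then gives equality'' is a non sequitur, since minimality again yields $\sum_i \bigsp_{u_i}(T_i) \subseteq N$ and not the converse. Likewise, in your ``clean route'' the trivial-radii case is not ``visible'' from Theorem~\ref{theo:GB}: it is the entire content of the proposition. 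What is needed is a proof that the quotient seminorm on $V/\sum_i \bigsp_{u_i}(T_i)$ has trivial kernel. The paper obtains this by identifying the unit ball of that quotient with $\Sat_{T_1}\cdots\Sat_{T_n}(L_0)$, where $L_0$ is the $\Kz$-module generated by the image of $1$ (using that the unit ball of $\KX$ is the $\Kz$-span of the monomials and that the $T_i$ commute), and then invoking Lemma~\ref{lem:saturate} --- Cayley--Hamilton with integral characteristic polynomial, available precisely because the Big parts have been quotiented out --- to conclude that this ball is a finitely generated $\Kz$-module and therefore contains no $K$-line. Nothing playing this role appears in your proposal, so the stated equality is not established.
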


\begin{proof}
First of all, we observe that, up to replacing $K$ by $K[\pi^{1/D}]$ for 
a well-chosen integer $D$, we can assume without loss of generality that 
$\u$ is in $\ZZ^n$. Replacing $T_i$ by $\pi^{-u_i} T_i$ and $\r$ by
$\r{-}\u$, we may further suppose that $\u = (0, \ldots, 0)$.

Let $i \in \{1, \ldots, n\}$ and let $x \in \bigsp_0(T_i)$.
By definition, $x$ is killed by $\chi_{\bigsp,0,T_i}(T_i)$.
In other words, if $f \in \KX[\r]$ is a lifting of $x$, the
product $\chi_{\bigsp,0,T_i}(X_i) \cdot f$ lies in $I$.
Now, we claim that $\chi_{\bigsp, 0, T_i}(X_i)$ is invertible in
$\KX$ because it is a product of factors of the form
$a^{-1} (1-aX_i)$ with $\val(a) > 0$. Consequently, $f$ must
be an element of $J$. By Proposition~\ref{prop:Phi}, we derive 
$x \in N$, which proves the inclusion $\bigsp_0(T_i) \subset N$.
Since this holds for any $i$, the $\supset$ part of the
Proposition is proved.

Set $N' = \sum_{i=1}^n \bigsp_0(T_i)$ and $W' = V/N'$.
From what we have done so far, we deduce that the semi-norm $\Vert 
\cdot \Vert_V$ on $V$ induces a semi-norm on $W'$. The proposition
will follow if we can prove that $\Vert \cdot \Vert_V$ is indeed a
norm (\emph{i.e.} with trivial kernel) on $W'$. In order to do so,
we consider the unit ball of $W'$, namely:
$$D' = \big\{\, x \in W' \,\, \text{s.t.} \,\,
\Vert x \Vert_V \leq 1\,\big\}.$$
We want to prove that $D'$ does not contain any $K$-line.
For this, we remember that the unit ball of $\KX$ is exactly the
$\Kz$-module generated by the monomials $\X^{\mathbf i}$ for 
$\mathbf i$ varying in $\NN^n$. Therefore, $D'$ is the smallest 
$\Kz$-module stable under the $T_i$'s and containing the image
of $1 \in \KX$ in $W'$. Keeping in mind in addition that the
$T_i$'s commute pairwise, we get
$D' = \Sat_{T_1} \Sat_{T_2} \cdots \Sat_{T_n} (L_0)$
where $L_0$ is the sub-$\Kz$-module of $W'$ generated by the image
of~$1$.

Besides, on $W'$, all the eigenvalues of all the $T_i$'s 
have nonnegative valuation since we have quotiented out all the
$\bigsp_0(T_i)$'s. Consequently Lemma~\ref{lem:saturate} applies 
and shows that $D'$ is finitely generated. In particular, it
contains no $K$-line, as wanted.
\end{proof}

\subsubsection*{Explicit computations}

\begin{algorithm}[t]
  \caption{\texttt{NewMulMat}$(T_1, \ldots, T_n, v)$}
  \label{algo:new_mult_mat}
  \Input{$T_1,\dots,T_n$ the multiplication matrices over $V$,\\
         $v$ the image of $1 \in \KX[\r]$ in $V$,
         $\u \in \ZZ^n$}
  \Output{$U_1,\dots,U_n$
          the multiplication matrices 
          over the unit ball of $W$,\\
          $w$ the image of $1 \in \KX[\r]$ in $W$}
  $N \leftarrow \{0\}$\;

  \For{$i \in \llbracket 1,n \rrbracket$}
  {
    $N \leftarrow N + \texttt{Big}_0(T_i)$\;
  }
  $W \leftarrow V/N$\;
  $L \leftarrow v \Kz$\;
  \For{$i \in \llbracket 1,n \rrbracket$}
  {
    $L \leftarrow \texttt{Saturate}(T_i,\, L)$\;
  }
  \textbf{return} $T_{1|L},\dots, T_{n|L}, v \mod N$\;
\end{algorithm}

It is straightforward to turn the previous theoretical analysis
into an actual algorithm that computes the space $W \simeq V/N$ 
and the multiplication matrices acting on it. In fact, for later use, 
it will not be enough to express these matrices in any $K$-basis of $W$, 
but we shall really need a $\Kz$-basis of the unit ball of~$W$ (for the 
norm $\Vert \cdot \Vert_V$ introduced before).

When $\u = (0, \ldots, 0)$, Algorithm~\ref{algo:new_mult_mat} does
the job.
In the description of this algorithm, we have implicitely assumed 
that all $K$-vector spaces and $\Kz$-modules are equipped with 
distinguished bases, and consequently used the same notation for a matrix 
and the endomophism it represents. All operations on $\Kz$-modules 
can be handled using Hermite normal forms as recalled in \S 
\ref{ssec:linalg}; similarly, operations on $K$-vectors spaces can be 
done using Smith normal forms, which permits to keep better numerical 
stability.

If $\delta$ denotes the dimension of $V = \KX[\r]/I$ (which is also the 
size of the matrices $T_i$'s), Algorithm~\ref{algo:new_mult_mat} 
requires at most $\softO(n \delta^3)$ operations in the base field~$K$.

\subsection{Reconstruction of the Gröbner basis}
\label{subsec:FGLM_for_Tate_alg}

The final step in the FGLM algorithm is the computation of a Gröbner 
basis from the datum of the multiplication matrices.

\subsubsection*{Trivial log-radii}

We first address the case where $\u = (0, \ldots, 0)$, which is
covered by Algorithm~\ref{algo:tate_fglm} (page \pageref{algo:tate_fglm}).
This algorithm uses a routine \texttt{FGLMField} which takes
as input a set of $n$ multiplication matrices over a field (together
with the vector representing the monomial~$1$) and a term ordering
and returns the corresponding Gröbner basis.
A description of such an algorithm performing this task can be found
in many places in the litterature, for example in the original article
by Faugère \emph{et al.}~\cite{Faugere:1993}.

\begin{algorithm}[t]
  \caption{\texttt{GB}$(U_1, \ldots, U_n, w, \leq)$} \label{algo:tate_fglm}
  \Input{$U_1,\dots,U_n$ the multiplication matrices over the 
         unit ball of $W$,\\
         $w$ the image of $1 \in \KX[\r]$ in $W$,\\
         $\leq$ a monomial ordering}
  \Output{A Gröbner basis $G$ of the ideal $J \subset \KX[\u]$}

  $\bar G \leftarrow \texttt{FGLMField}(U_1 \mod \pi, \ldots, U_n \mod \pi, w \mod \pi, \leq)$\;

  $B \leftarrow \{m \text{ monomials not divisible by any } \LM(g), g \in \bar G\}$

  $M = (M_{\star,\mu})_{\star \in \mathcal B, \mu \in B} \leftarrow \text{zero matrix}$\;
  \tcp{$\mathcal B$ denotes the distinguished basis of $W$ we are working with}
  $M_{\star, 1} \leftarrow w$\;

  \For{$\mu \in B \backslash \{1\}$ by increasing order for $\leq$}
  {
    write $\mu = X_i \mu'$ with $i \in \{1, \ldots, n\}$, $\mu' \in B$\;
    $M_{\star,\mu} \leftarrow U_i \cdot M_{\star,\mu'}$%
     \tcp*{product matrix-vector}
  }

  $N = (N_{\star,m})_{\star \in \mathcal B, m \in \LM(\bar G)} \leftarrow \text{zero matrix}$\;

  \For{$m \in \LM(\bar G)$}
  {
    write $m = X_i \mu$ with $i \in \{1, \ldots, n\}$, $\mu \in B$\;
    $N_{\star,m} \leftarrow U_i \cdot M_{\star,\mu}$%
     \tcp*{product matrix-vector}
  }

  $Q\leftarrow M^{-1}N$\;\label{algo:line:fglm-3}
  $G \leftarrow \left( m - \sum_{\mu \in B} Q_{\mu,m} \mu \right)_{m \in \LM(\bar G)}$\;
  \Return{$G$}
\end{algorithm}

\begin{prop}
  Algorithm \ref{algo:tate_fglm} is correct 
  and runs in $O(n \delta^3)$ arithmetic operations
  where $\delta$ denotes the dimension of $W$.
\end{prop}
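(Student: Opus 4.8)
The plan is to work modulo $\pi$ throughout and to lift via Theorem~\ref{theo:GB}; recall that here $\u = (0,\ldots,0)$, so $\KX[\u] = \KX$ and $W = \KX/J$. Its unit ball for the semi-norm $\Vert\cdot\Vert_V$ is $B_W := \KzX/(J \cap \KzX)$, a free $\Kz$-module of rank $\dim_K W =: \delta$ whose reduction modulo $\pi$ is the classical quotient algebra $\bar A := \bar K[\X]/\bar J$, where $\bar J$ is the reduction of $J \cap \KzX$. By the analysis of Section~\ref{subsec:from_poly_to_tate_alg} (Proposition~\ref{prop:N_in_the_bigTi}, Algorithm~\ref{algo:new_mult_mat}), the $U_i$ are the matrices of multiplication by $X_i$ on $B_W$ in the distinguished $\Kz$-basis $\mathcal B$ and $w$ is the class of~$1$; hence the $U_i \bmod \pi$ are exactly the multiplication matrices of $\bar A$ in the basis $\mathcal B \bmod \pi$, and $w \bmod \pi$ is the class of~$1$. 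First I would record the consequences: the call to \texttt{FGLMField} returns the reduced Gröbner basis $\bar G$ of $\bar J$ for the given monomial ordering, the set $B$ is its staircase, and the classes of the monomials of $B$ form a $\bar K$-basis of $\bar A$; in particular $|B| = \dim_{\bar K}\bar A = \delta$.

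The next step is to check that $M$ is invertible over $\Kz$. Its columns are the $\mathcal B$-coordinates of the images in $B_W$ of the monomials of $B$; reducing modulo $\pi$ they become the coordinates, in $\mathcal B \bmod \pi$, of the classes of these monomials in $\bar A$, which form a basis, so $M \bmod \pi \in \GL_\delta(\bar K)$ and therefore $M \in \GL_\delta(\Kz)$ with $M^{-1}$ having entries in $\Kz$. Since the columns of $N$ lie in $B_W$, the matrix $Q = M^{-1}N$ has entries in $\Kz$. Moreover, by construction $N_{\star,m} = U_i\,M_{\star,\mu}$ when $m = X_i\mu$, which is the image of $m$ in $W$; hence $M\,Q_{\star,m} = N_{\star,m}$ says that $\sum_{\mu \in B} Q_{\mu,m}\,\mu \equiv m \pmod J$, so each output polynomial $g = m - \sum_{\mu} Q_{\mu,m}\,\mu$ lies in $J$ and has Gauss valuation~$0$. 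Reducing $Q = M^{-1}N$ modulo $\pi$ shows that $\sum_\mu (Q_{\mu,m}\bmod\pi)\,\mu$ is the normal form of $m$ modulo $\bar G$; since $\bar G$ is reduced and monic, $g\bmod\pi = m - \NF_{\bar G}(m)$ is exactly the element of $\bar G$ with leading monomial~$m$, so $G\bmod\pi = \bar G$. This also gives $\LT(g) = m$: a term $-Q_{\mu,m}\mu$ with $\val(Q_{\mu,m}) = 0$ involves a monomial $\mu <_m m$ (the tail of a reduced element whose leading monomial is $m$), while a term with $\val(Q_{\mu,m}) > 0$ has larger Gauss valuation than $m$ and is therefore smaller for the Tate term ordering.

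It then remains to prove the Gröbner basis property. Given $f \in J$, after rescaling by a power of $\pi$ I may assume $f$ has Gauss valuation~$0$; then $\bar f := f \bmod \pi$ is a nonzero element of $\bar J$, and $\LM(f) = \LM(\bar f)$ because the leading monomial of $f$ is the $\leq_m$-largest monomial of $f$ carrying a coefficient of valuation~$0$. As $\bar G$ is a classical Gröbner basis of $\bar J$, some $\LM(g)$ with $g \in G$ divides $\LM(\bar f) = \LM(f)$, whence $\LT(g) \mid \LT(f)$; thus $G$ is a Gröbner basis of $J$. (Alternatively, one first checks $(G) = J$ by comparing $K$-dimensions and then applies Theorem~\ref{theo:GB}(2) to $G\bmod\pi = \bar G$.) For the complexity: the call to \texttt{FGLMField} on $\delta \times \delta$ matrices over $\bar K$ costs $O(n\delta^3)$ operations; assembling $M$ and $N$ uses one matrix--vector product per column, i.e.\ $O(\delta^3)$ and $O(|\LM(\bar G)|\,\delta^2)$ respectively; inverting $M$ costs $O(\delta^3)$ and forming $M^{-1}N$ costs $O(|\LM(\bar G)|\,\delta^2)$; building $G$ is negligible. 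Since $|\LM(\bar G)| \le n\delta$, the total is $O(n\delta^3)$ arithmetic operations.

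The one genuinely delicate point, in my view, is the lattice bookkeeping underlying the first two paragraphs: that the distinguished $\Kz$-basis $\mathcal B$ produced by \texttt{NewMulMat} is a basis of the unit ball $B_W$ whose reduction modulo $\pi$ is a basis of $\bar A = \bar K[\X]/\bar J$, and that the $U_i$ reduce modulo $\pi$ to the honest multiplication matrices on $\bar A$. Once this is secured, the invertibility of $M$ over $\Kz$ and the identity $G\bmod\pi = \bar G$ fall out immediately, and the rest is the classical FGLM argument run over $\bar K$ combined with the lifting criterion of Theorem~\ref{theo:GB}(2).
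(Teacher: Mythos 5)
Your proof is correct and follows essentially the same route as the paper: identify the unit ball of $W$ with $\KzX/J^\circ$, observe that the $U_i \bmod \pi$ are the multiplication matrices of $\bar K[\X]/\bar J$, run classical FGLM over the residue field, and lift via the linear system $Q = M^{-1}N$ together with Theorem~\ref{theo:GB}(2). You merely spell out details the paper leaves implicit (invertibility of $M$ over $\Kz$, that $G \bmod \pi = \bar G$ with $\LT(g)=m$, and the complexity bookkeeping), which is a welcome but not substantively different argument.
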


\begin{proof}
We recall that we assume $\u = (0, \ldots, 0)$. 
Let $J = I{\cdot}\KX$ be as in
Section~\ref{subsec:from_poly_to_tate_alg}. We recall that $W =
\KX/J$ by definition. Let $D$ denote the unit ball of $W$. From 
the facts that the unit ball of $\KX$ is $\KzX$ and the norm on $W 
\simeq \KX/J$ is the quotient norm, we deduce that $D \simeq \KzX/
J^\circ$ with $J^\circ = J \cap \KzX$.
The reductions modulo $\pi$ of the $U_i$'s are then the multiplication
matrices on the quotient $\bar J = J^\circ/\pi J^\circ$.
The call to \texttt{FGLMField} then returns a Gröbner basis of the
ideal $\bar J$.
From Theorem~\ref{theo:GB}.(2), we derive that the leading terms
of a Gröbner basis of $J$ are formed by the monomials in $\LM(\bar G)$.
It follows from this that $B$ is the staircase of the ideal $J$.
In particular, its cardinality is the dimension of $W$, showing that
the matrix $M$ is a square matrix.
After the loops, the columns of $M$ (resp. of $N$) contain the
coordinates of the $\mu$'s (resp. the $m$'s) in the distinguished 
basis $\mathcal B$ for $\mu$ varying in~$B$ (resp. for $m$ varying
in~$\LT(G)$).
The matrix $Q = M^{-1} N$ then contains the expression of the $m$'s
in terms of linear combination of the $\mu$'s. This shows the
correctness of the algorithm.

The fact that the complexity is in $O(n \delta^3)$ arithmetic 
operations is easily checked.
\end{proof}

\subsubsection*{General log-radii}

We now consider the general case where $\u = (u_1, \ldots, u_n)
\in \QQ^n$.
We take $D \in \ZZ_{>0}$ to be a common denominator
of the coordinates of $\u$ (consequently $D \cdot \u \in \ZZ^n$).
We define the field extension $L=K[\eta]$ such that $\eta^D=\pi$
and perform the change of variables $\tilde X_i = \eta^{D u_i} X_i$.
The Tate algebra $L \otimes_K \KX[\u]$ becomes isomorphic to
$L\{\mathbf{\tilde X}\}$ and we can then apply all what precedes
with $L \otimes_K \KX[\u]$.

Inside $L\{\mathbf{\tilde X}\} \simeq L \otimes_K \KX[\u]$ sits 
the subset $\eta^\ZZ \KX[\u]$ consisting of series of the form $\eta^v 
f$ with $v \in \ZZ$ and $f \in \KX[\u]$.
Let $I_L$, $J_L$, $V_L$ and $W_L$ denote the spaces deduces by
$I$, $J$, $V$ and $W$ respectively by extending scalars from $K$
to $L$. Inside them, we can similarly define $\eta^\ZZ I$, 
$\eta^\ZZ J$, $\eta^\ZZ V$ and $\eta^\ZZ W$.
We claim that then Algorithms~\ref{algo:new_mult_mat}
and~\ref{algo:tate_fglm} can be adapted so that they only have
to manipulate vectors lying in these subsets. Indeed:

\begin{itemize}[leftmargin=3ex]
\item
the $\bigsp_0(\tilde T_i)$'s can be computed without passing
to~$L$ because they are equal to the $\bigsp_{u_i}(T_i)$'s which are
defined over $K$;
\item
similarly the quotient $V/N$ is defined over $K$ and then
does not create any trouble;
\item
one checks that the Hermite normal form of a matrix whose
column vectors are in $\eta^\ZZ V$, remains of this form;
\item
the column vectors of the matrices $M$ and $N$ of
Algorithm~\eqref{algo:tate_fglm} all come from monomials and so
have the required shape.
\end{itemize}

\noindent
Proceeding this way, we avoid the time penalty due to scalar extension 
from $K$ to $L$ and keep a complexity of $O(n \delta^3)$ arithmetic 
operations.
At the end of the day, the output of Algorithm~\ref{algo:tate_fglm}
is then a Gröbner basis $G$ of $J_L$ consisting of series in
$\eta^\ZZ \KX[\u]$. Still staying in the same subset, we can 
normalize these series so that they all have Gauss valuation~$0$.
In this case, $G$ is not only a Gröbner basis of
$J_L$ but also a Gröbner basis on the ideal $J_L^\circ = L
\otimes_K J^\circ = J_L \cap \KzX[\u]$.
From~\cite[Proposition 3.10]{CVV}, we deduce that $G \cap \KX[\u]$
remains a Gröbner basis of $J^\circ$ and hence of $J$.

\subsubsection*{Conclusion}

Combining Algorithms~\ref{algo:mult_mat_from_TateGB} (or 
\ref{algo:mult_mat_rec}), \ref{algo:new_mult_mat} and~\ref{algo:tate_fglm} 
and the above discussion for covering the case of arbitrary log-radii, we finally end 
up with a complete FGLM algorithm as announced in Theorem~\ref{theo:intro}
in the introduction.
Plugging Algorithm~\ref{algo:nonreduced} into the machine, we notice 
that our algorithm can also accept Gröbner bases which are nonreduced 
as soon as they are reduced modulo the maximal ideal. However, in this 
case, the complexity may grow up rapidly, depending on the shape of the 
input Gröbner basis.



%
%


\bibliographystyle{plain}

\end{document}
